\newenvironment{proof}{\par\noindent{\bf Proof:}}{\mbox{}\hfill$\qed$\\}
\newcommand{\ignore}[1]{ }
\newcounter{rem}
\def\qed{\hbox{\rlap{$\sqcap$}$\sqcup$}}
\begin{document}

\title{Vertex Fault-Tolerant Geometric Spanners for Weighted Points}

\author{
Sukanya Bhattacharjee\inst{1}
\and
R. Inkulu\inst{1}
\thanks{A preliminary version appeared in the Proceedings of the $25^{th}$ International Computing and Combinatorics Conference (COCOON 2019).}
}

\institute{
Department of Computer Science \& Engineering\\
IIT Guwahati, India\\
\email{\{bsukanya,rinkulu\}@iitg.ac.in}
}

\maketitle

\pagenumbering{arabic}
\setcounter{page}{1}

\begin{abstract}
Given a set $S$ of $n$ points, a weight function $w$ to associate a non-negative weight to each point in $S$, a positive integer $k \ge 1$, and a real number $\epsilon > 0$, we present algorithms for computing a spanner network $G(S, E)$ for the metric space $(S, d_w)$ induced by the weighted points in $S$.
The weighted distance function $d_w$ on the set $S$ of points is defined as follows:
for any $p, q \in S$, $d_w(p, q)$ is equal to $w(p) + d_\pi(p, q) + w(q)$ if $p \ne q$, otherwise, $d_w(p, q)$ is $0$.
Here, $d_\pi(p, q)$ is the Euclidean distance between $p$ and $q$ if points in $S$ are in $\mathbb{R}^d$, otherwise, it is the geodesic (Euclidean) distance between $p$ and $q$.
The following are our results:
(1) When the weighted points in $S$ are located in $\mathbb{R}^d$, we compute a $k$-vertex fault-tolerant $(4+\epsilon)$-spanner network of size $O(k n)$.
(2) When the weighted points in $S$ are located in the relative interior of the free space of a polygonal domain $\cal P$, we detail an algorithm to compute a $k$-vertex fault-tolerant $(4+\epsilon)$-spanner network with $O(\frac{kn\sqrt{h+1}}{\epsilon^2} \lg{n})$ edges.
Here, $h$ is the number of simple polygonal holes in $\cal P$.
(3) When the weighted points in $S$ are located on a polyhedral terrain $\cal T$, we propose an algorithm to compute a $k$-vertex fault-tolerant $(4+\epsilon)$-spanner network, and the number of edges in this network is $O(\frac{kn}{\epsilon^2} \lg{n})$.
\end{abstract}

\begin{keywords}
Computational Geometry, Geometric Spanners, Approximation Algorithms
\end{keywords}

\section{Introduction}
\label{sect:intro}

In designing geometric networks on a given set of points in a metric space, it is desirable for the network to have short paths between any pair of nodes while being sparse with respect to the number of edges.
For a set $S$ of $n$ points in a metric space $\cal M$, a network on $S$ is an undirected graph $G$ with vertex set $S$ and an edge set $E$, where every edge of $G$ associated with a weight. 
The distance in $G$ between any two vertices $p$ and $q$ of $G$, denoted by $d_G(p, q)$, is the length of a shortest (that is, a minimum length) path between $p$ and $q$ in $G$.
For a real number $t \ge 1$, the graph $G$ is called a {\it $t$-spanner} of points in $S$ if for every two points $p, q \in S$, $d_G(p, q)$ is at most $t$ times the distance between $p$ and $q$ in $\cal M$.
The smallest $t$ for which $G$ is a $t$-spanner of points in $S$ is called the {\it stretch factor} of $G$, and the number of edges of $G$ is called its size.
Given a set $S$ of points in plane, each associated with a non-negative weight, and a positive integer $k$, in this paper, we study computing an edge-weighted geometric graph $G$ that is a {\it vertex fault-tolerant $t$-spanner} for the metric space induced by the weighted points in $S$;
that is, for any set $S' \subseteq S$ of cardinality at most $k$, the graph $G \setminus S'$ is a $t$-spanner for the metric space induced by the weighted points in $S-S'$.

\subsection*{Previous Work}

Peleg and Sch\"{a}ffer~\cite{journals/jgt/PelegSchaffer89} and Chew~\cite{journals/jcss/Chew89} introduced spanner networks.
Alth\"{o}fer et~al.~\cite{journals/dcg/AlthoferDDJS93} studied sparse spanners on edge-weighted graphs with edge weights obeying the triangle-inequality.
The text by Narasimhan and Smid \cite{books/compgeom/narsmid2007}, and the handbook chapters by Eppstein \cite{hb/cg/Epp99} and Gudmundsson and Knauer~\cite{hb/apprxheu/GudKnau07} detail various results on Euclidean spanners, including a $(1+\epsilon)$-spanner for the set $S$ of $n$ points in $\mathbb{R}^d$ that has $O(\frac{n}{\epsilon^{d-1}})$ edges, for any $\epsilon > 0$.

Apart from the small number of edges, spanners with additional properties such as small weight, bounded degree, small diameter, planar network, etc., were also considered.
The significant results in optimizing these parameters in geometric spanner network design include 
spanners of low degree \cite{journals/cgta/ABCGHSV08,conf/cccg/CarmiChai10,journals/cgta/BCCCKL13}, 
spanners of low weight \cite{journals/algorithmica/BCFMS10,journals/ijcga/DasNara97,journals/siamjc/GudmudLevcoNar02}, 
spanners of low diameter \cite{conf/focs/AryaMS94,journals/cgta/AryaMountSmid99}, 
planar spanners \cite{conf/esa/ArikatiCCDSZ96,journals/jcss/Chew89,conf/optalgo/DasJoseph89,conf/wads/KeilGutwin89}, 
spanners of low chromatic number \cite{journals/cgta/BCCMSZ09}, 
fault-tolerant spanners \cite{journals/dcg/ABFG09,journals/dcg/CzumajZ04,journals/algorithmica/LevcopoulosNS02,conf/wads/Lukovszki99,journals/corr/KapLi13,conf/stoc/Solomon14,conf/caldam/Inkulu19b}, 
low power spanners \cite{journals/wireless/Karim11,conf/infocom/SegalShpu10,journals/jco/WangLi06}, 
kinetic spanners \cite{journals/dcg/AbamBerg11,journals/cgta/ABG10}, 
angle-constrained spanners \cite{journals/jocg/CarmiSmid12}, 
and combinations of these \cite{conf/stoc/AryaDMSS95,journals/algorithmica/AryaS97,journals/algorithmica/BFRV18,journals/algorithmica/BoseGudSmid05,journals/ijcga/BoseSmidXu09,journals/corr/CarmiChait10}.
For the case of spanners in a metric space with bounded doubling metric, a few results are given in \cite{conf/stoc/Talwar04}.

As observed in Abam et~al.,~\cite{journals/algorithmica/AbamBFGS11}, the cost of traversing a path in a network is not only determined by the lengths of edges along the path, but also by the delays occurring at the vertices on the path.
The result in \cite{journals/algorithmica/AbamBFGS11} models these delays by associating non-negative weights to points. 
Let $S$ be a set of $n$ points in $\mathbb{R}^d$. 
For every $p \in S$, let $w(p)$ be the non-negative weight associated to $p$. 
The following weighted distance function $d_w$ on $S$ defining the metric space $(S, d_w)$ is considered by Abam et~al. in \cite{journals/algorithmica/AbamBFGS11}, and by Bhattacharjee and Inkulu in \cite{conf/caldam/Inkulu19b}:
for any $p, q \in S$, $d_w(p, q)$ is equal to $w(p) + |pq| + w(q)$ if $p \ne q$; otherwise, $d_w(p, q)$ is equal to $0$.

Recently, Abam et~al.~\cite{conf/soda/AbamBS17} showed that there exists a $(2 + \epsilon)$-spanner with a linear number of edges for the metric space $(S, d_w)$ that has bounded doubling dimension.
And, \cite{journals/algorithmica/AbamBFGS11} gives a lower bound on the stretch factor, showing that $(2+\epsilon)$ stretch is nearly optimal.
Bose et~al. \cite{conf/swat/BoseCC08} studied the problem of computing a spanner for a set of weighted points in ${\mathbb R}^2$, while defining the distance $d_w$ between any two distinct points $p, q \in S$ as $d(p,q) - w(p) - w(q)$. 
Under the assumption that the distance between any pair of points is non-negative, they showed the existence of a $(1 + \epsilon)$-spanner with $O(\frac{n}{\epsilon})$ edges. 

A set of $h \ge 0$ disjoint simple polygonal holes (obstacles) contained in a simple polygon $P$ is the {\it polygonal domain} ${\cal P}$.
Note that when $h = 0$, the polygonal domain $\cal P$ is essentially a simple polygon.
The free space $\cal D$ of a polygonal domain ${\cal P}$ is defined as the closure of $P$ excluding the union of the interior of polygons contained in $P$.
Note that the free space of a simple polygon is its closure.
A shortest path between any two points $p$ and $q$ is a path in $\cal D$ whose length (in Euclidean metric) is less than or equal to the length of any path between $p$ and $q$ located in $\cal D$.
The distance along any shortest path between $p$ and $q$ is denoted by $d_\pi(p, q)$.
If the line segment joining $p$ and $q$ is in $\cal D$, then the Euclidean distance between $p$ and $q$ is denoted by $d(p, q)$, i.e., in this case, $d_\pi(p, q)$ is equal to $d(p, q)$.

Given a set $S$ of $n$ points in the free space $\cal D$ of $\cal P$, computing a geodesic spanner of $S$ is considered in Abam et~al.~\cite{conf/compgeom/AbamAHA15}.
The result in \cite{conf/compgeom/AbamAHA15} showed that for the metric space $(S, d_\pi)$, for any constant $\epsilon > 0$, there exists a $(5+\epsilon)$-spanner of size $O(\sqrt{h}n (\lg n)^{2})$. 
Further, when the input points are located in a simple polygon, for any constant $\epsilon > 0$, \cite{conf/compgeom/AbamAHA15} devised an algorithm to compute a $(\sqrt{10}+\epsilon)$-spanner with $O(n (\lg n)^{2})$ edges.

A polyhedral terrain $\cal T$ is the graph of a piecewise linear function $f: D \rightarrow {\mathbb R}^3$, where $D$ is a convex polygonal region in the plane.
Given a set $S$ of $n$ points on a polyhedral terrain $\mathcal{T}$, the geodesic distance between any two points $p, q \in S$ is the distance along any shortest path on the terrain between $p$ and $q$.
The spanner for points on a terrain with respect to geodesic distance on terrain is a geodesic spanner.
Unlike metric space induced by Euclidean distance among points in ${\mathbb R}^d$, the metric space induced by points on a terrain does not have a bounded doubling dimension.
Hence, geometric spanners for points on polyhedral terrains has unique characteristics and interesting to study.
The algorithm in \cite{conf/soda/AbamBS17} proved that for a set of unweighted points on any polyhedral terrain, for any constant $\epsilon > 0$, there exists a $(2 + \epsilon)$-geodesic spanner with $O(n \lg n)$ edges.

A graph $G(S, E)$ is a {\it $k$-vertex fault-tolerant $t$-spanner}, denoted by $(k, t)$-VFTS, for a set $S$ of $n$ points in $\mathbb{R}^d$ whenever for any subset $S'$ of $S$ with size at most $k$, the graph $G \setminus S'$ is a $t$-spanner for the points in $S \setminus S'$.
The algorithms given in Levcopoulos et~al., \cite{journals/algorithmica/LevcopoulosNS02}, Lukovszki \cite{conf/wads/Lukovszki99}, and Czumaj and Zhao \cite{journals/dcg/CzumajZ04} compute a $(k, t)$-VFTS for the set $S$ of points in $\mathbb{R}^d$.
These algorithms are also presented in \cite{books/compgeom/narsmid2007}.
Levcopoulos et~al.~\cite{journals/algorithmica/LevcopoulosNS02} devised an algorithm to compute a $(k, t)$-VFTS of size $O(\frac{n}{(t-1)^{(2d -1)(k+1)}})$ in $O(\frac{n \lg{n}}{(t-1)^{4d -1}} + \frac{n}{(t-1)^{(2d -1)(k+1)}})$ time,
and another algorithm to compute a $(k, t)$-VFTS with $O(k^{2}n)$ edges in $O(\frac{k n \lg n}{(t-1)^{d}})$ time. 
The result in \cite{conf/wads/Lukovszki99} gives an algorithm to compute a $(k, t)$-VFTS of size $O(\frac{k n}{(t-1)^{d-1}})$ in $O(\frac{1}{(t-1)^{d}}(n \lg^{d-1} n \lg k + k n \lg \lg n))$ time. 
The algorithm in \cite{journals/dcg/CzumajZ04} computes a $(k, t)$-VFTS having $O(\frac{k n}{(t-1)^{d-1}})$ edges in $O(\frac{1}{(t-1)^{d-1}}(k n \lg^{d} n + nk^{2} \lg k))$ time with total weight of edges upper bounded by $O(\frac{k^{2} \lg n}{(t-1)^{d}})$ multiplicative factor of the weight of a minimum spanning tree of the given set of points. 

\subsection*{Terminology}

Recall the Euclidean distance between two points $p$ and $q$ is denoted by $|pq|$ and the geodesic Euclidean distance between two points $p, q$ located in the free space of a polygonal domain is denoted by $d_\pi(p, q)$.
Here, $\pi$ denotes a shortest path between $p$ and $q$ located in the free space of the polygonal domain.
(Further, we would like to note the following terms are defined in previous subsection: polygonal domain, free space of a polygonal domain, and a geodesic shortest path between two points.)
We note that if the line segment joining $p$ and $q$ does not intersect any obstacle, $d_\pi(p, q)$ is equal to $|pq|$, otherwise $d_\pi(p, q)$ is the distance along a geodesic shortest path $\pi$ between $p$ and $q$.
The length of a shortest path between $p$ and $q$ in a graph $G$ is denoted by $d_G(p, q)$.
For a set $S'$ of vertices of $G$ and any two vertices $p, q$ of $G$ not belonging to $S'$, the distance along a shortest path between $p$ and $q$ in graph $G \setminus S'$ is denoted by $d_{G \backslash S'}(p, q)$.
As in \cite{journals/algorithmica/AbamBFGS11} and in \cite{conf/caldam/Inkulu19b}, the function $d_w$ is defined on a set $S$ of points as follows: for any $p, q \in S$, $d_w$ is equal to $w(p) + d_\pi(p, q) + w(q)$ if $p \ne q$; otherwise, $d_w(p, q)$ is equal to $0$. 

Recall that, for any set $S$ of points, any graph $G$ with vertex set $S$ and each of its edges associated with a non-negative weight is called a {\it $t$-spanner} of points in $S$ whenever $d_\pi(p, q) \le d_G(p, q) \le t \cdot d_\pi(p, q)$ for every two points $p, q \in S$ and a real number $t \ge 1$.
The smallest $t$ for which $G$ is a $t$-spanner of $S$ is called the {\it stretch factor} of $G$. 
The number of edges in $G$ is known as the size of $G$. 
A graph $G(S, E)$ is a {\it $k$-vertex fault-tolerant $t$-spanner}, denoted by $(k, t)$-VFTS, for a set $S$ of $n$ points whenever for any subset $S'$ of $S$ with size at most $k$, the graph $G \setminus S'$ is a $t$-spanner for the points in $S \setminus S'$.
For a real number $t > 1$ and a set $S$ of weighted points, a graph $G(S, E)$ is called a {\it $t$-spanner for weighted points in $S$} whenever $d_w(p, q) \le d_G(p, q) \le t \cdot d_w(p, q)$ for every two points $p$ and $q$ in $S$.
Given a set $S$ of points, a function $w$ to associate a non-negative weight to each point in $S$, an integer $k \geq 1$, and a real number $t > 0$, a geometric graph $G$ is called a {\it $(k, t, w)$-vertex fault-tolerant spanner for weighted points} in $S$, denoted by $(k, t, w)$-VFTSWP, whenever for any set $S' \subset S$ with cardinality at most $k$, the graph $G \setminus S'$ is a $t$-spanner for the weighted points in $S \setminus S'$.

\subsection*{Our Results}

The spanners computed in this paper are the first of their kind: we explore the fault-tolerance notion in the context of weighted points.
Given a set $S$ of $n$ points, a weight function $w$ to associate a non-negative weight to each point in $S$, a positive integer $k$, and a real number $\epsilon > 0$, we devise the following algorithms for computing vertex fault-tolerant spanners for the set $S$ of weighted points:

\begin{itemize}
\item[*]
When the points in $S$ are located in $\mathbb{R}^d$, we compute a $(k, 4+\epsilon)$-VFTSWP with $O(k n)$ edges.
We incorporate fault-tolerance to the recent results of \cite{conf/soda/AbamBS17}.
In specific, the stretch factor of the spanner computed here is same as the spanner computed in \cite{conf/soda/AbamBS17}, while the number of edges in our spanner is $O(k)$ times the number of edges of the spanner computed in \cite{conf/soda/AbamBS17}.

\item[]

\item[*]
When the points in $S$ are located in the relative interior of a polygonal domain $\cal P$, we present an algorithm to compute a $(k, 4+\epsilon)$-VFTSWP with $O(\frac{kn\sqrt{h+1}}{\epsilon^{2}}\lg{n})$ edges.
Here, $h$ is the number of holes in $\cal P$.
We extend the algorithms in \cite{conf/compgeom/AbamAHA15} to achieve the vertex fault-tolerance.

\item[]

\item[*]
When the points in $S$ are located on a polyhedral terrain, an algorithm is presented to compute a $(k, 4+\epsilon)$-VFTSWP with $O(\frac{kn}{\epsilon^{2}}\lg{n})$ edges.
By extending the result in \cite{conf/soda/AbamBS17}, we achieve the vertex fault-tolerance for the spanner among weighted points. 
\end{itemize}

Section~\ref{sect:rd} details the algorithm and its analysis to compute a $(k, 4+\epsilon)$-VFTSWP when the weighted points are in $\mathbb{R}^d$.
For the case of weighted points located in the free space of any polygonal domain, Section~\ref{sect:polydom} details an algorithm to compute a $(k, 4+\epsilon)$-VFTSWP.
Section~\ref{sect:terrains} presents an algorithm to compute a $(k, 4+\epsilon)$-VFTSWP when the input points are located on a polyhedral terrain.
Conclusions are in Section~\ref{sect:conclu}.

\vspace{-0.12in}
\section{Vertex fault-tolerant spanner for weighted points in $\mathbb{R}^d$}
\label{sect:rd}

Given a set $S$ of $n$ points located in $\mathbb{R}^d$, a weight function $w$ to associate a non-negative weight to each point in $S$, a positive integer $k$, and a real number $\epsilon > 0$, we devise an algorithm to compute a $(k, 4+\epsilon, w)$-vertex fault-tolerant spanner for the set $S$ of weighted points.
For any two distinct points $p, q \in S$, each associated with a non-negative weight, the weighted distance between $p$ and $q$, denoted by $d_{w}(p, q)$, is defined as the $w(p) + |pq| + w(q)$.
Following the algorithm in \cite{conf/soda/AbamBS17}, we partition the set $S$ of points into clusters.
As part of this clustering, the points in set $S$ are sorted in non-decreasing order of their weights.
The first $k+1$ points in this sorted list are chosen as the centers of $k+1$ distinct clusters. 
As the algorithm progress, more points may be added to these clusters as well as more clusters (with respective cluster centers) may also be initiated.

Our algorithm considers points in $S$ in non-decreasing order with respect to their weights.
For each point $p$ considered, among the current set of cluster centers, we determine a cluster center $c_j$ that is nearest to $p$.
Let $C_j$ be the cluster to which $c_j$ is the center.
It adds $p$ to the cluster $C_j$ if $|p c_j| \leq \epsilon \cdot w(p)$; otherwise, a new cluster with $p$ as its center is initiated. 
We note that choosing cluster corresponding to the nearest cluster center to each point helps in upper bounding the stretch factor.
Let $C = \{ c_1, \ldots, c_z \}$ be the final set of cluster centers obtained in this algorithm.
For every $i \in [1, z]$, we denote the cluster to which $c_i$ is the center with $C_i$.
Using the algorithm from \cite{conf/stoc/Solomon14}, we compute a $(k, 2+\epsilon)$-VFTS $\mathcal{B}$ for the set $C$ of cluster centers.
We note that the degree of each vertex of $\mathcal{B}$ is $O(k)$.
We denote the stretch of $\cal{B}$ by $t_{\cal{B}}$.

The spanner graph $G$ being constructed is initialized so that its vertex set comprises of points in $S$ and its edge set comprises of all the edges in $\cal B$.
Our algorithm to compute a $(k, 4+\epsilon)$-VFTSWP differs from \cite{conf/soda/AbamBS17} with respect to both the algorithm used in computing $\mathcal{B}$ and the edges included in $G$. 
The latter part is described now.
For every $i \in [1, z]$, let $C_i'$ be the set comprising of any $\min\{k+1, |C_i|\}$ least weighted points of cluster $C_i$. 
Also, for every $i \in [1, z]$, let $B_i$ be the set comprising of all the adjacent vertices of the node corresponding to center ($c_i$) of cluster $C_i$ in $\cal B$.
In specific, let $B_l$ be the set comprising of all the adjacent vertices of the node corresponding to center ($c_l$) of cluster $C_l$ in $\cal B$.
For each point $p \in S \setminus C$, if $p$ belongs to cluster $C_l$, then for each $v \in B_l \cup C_l'$, our algorithm introduces an edge between $p$ and $v$ with weight $|pv|$ to $G$. 
These additional edges (one not part of $\cal B$) of $G$ help in achieving the $k$-fault tolerance.
Refer to Algorithm~\ref{alg:addfts}.

\begin{algorithm}[ht]
    \caption{VFTSWPRd($S, k, \epsilon$)}
    \label{alg:addfts}
     
    \SetAlgoLined
    \SetKwInOut{Input}{Input}
    \SetKwInOut{Output}{Output}
    
    \Input{A set $S$ of $n$ points located in $\mathbb{R}^d$, a weight function $w$ that associates a non-negative weight to each point in $S$, an integer $k \geq 1$, and a real number $\epsilon > 0$.}
    \Output{A $(k,(4+\epsilon))$-VFTSWP $G(S, E)$ for the points in $S$.}

    Sort points in $S$ in non-decreasing order of their weight, and save them in that order in an array $L$. \\

    $C \leftarrow \phi$ \hspace{0.1in} \scriptsize{} (The set $C$ stores the centers of clusters.)\normalsize{} \\

    \ForEach{$i$ from $1$ to $k+1$} {
        Initialize a cluster $C_i$ with its center $c_i = L[i]$.  \\
        $C \leftarrow C \cup \{c_i\}$.
    }

    $z \leftarrow k+1$ \\
    
    \ForEach{$i$ from $k+2$ to $n$}{
        $p \leftarrow L[i]$. \\
    
        Among all the cluster centers in $C$, find the center $c_j$ that is at a minimum distance from $p$. \\

        \uIf{$|p c_j| \leq \epsilon. w(p)$}{
            $C_j \leftarrow C_j \cup \{ p \}$.
            
        } \uElse{
            $z \leftarrow z+1$. \\

            Initialize a new cluster $C_{z}$ with its center $c_z = p$. \\
            
            $C \leftarrow C \cup \{c_z\}$.
        }
    }
        
    Using the algorithm in \cite{conf/stoc/Solomon14}, construct a $(k,(1+\epsilon))$-VFTS $\mathcal{B}$ for the cluster centers in $C$.
    
    \ForEach{$p \in S$ and $p \notin C$}{
        Find the cluster $C_i$ to which $p$ belongs; let $c_i$ be the center of $C_i$. \\

        Let $C_i'$ be the set of $min \{k+1, |C_i|\}$ least weighted points of cluster $C_i$.  For every $p' \in C_i'$, introduce an edge in $G$ between $p$ and $p'$ with weight $|pp'|$. \\
        
        Let $B_i$ be the $k$ nearest neighbors of $c_i$ in $\cal B$.  For every $v \in B_i$, introduce an edge in $G$ between $p$ and $v$ with weight $|pv|$.
    }
\end{algorithm}

With the help of the following facts, we prove the graph $G$ has $4+\epsilon$ stretch for weighted points in $S$:
(i) the network $\cal B$ is a $(k, t_{\cal B})$-VFTS for the cluster centers in $C$,
(ii) the weight associated with every point in $S$ is non-negative,
(iii) points in $S$ are sorted in non-decreasing order of their weights as part of initializing the clusters, 
(iv) the first point added to any cluster is the center of that cluster,
(v) any point $x$ is included into a cluster $C_l$ only if $|xc_l| \le \epsilon \cdot w(x)$, and
(v) the triangle inequality in $\mathbb{R}^d$.
In the following theorem, by choosing $t_{\cal B} = 2+\epsilon$, we prove that the graph $G$ is indeed a $(k, 4+\epsilon)$-VFTSWP and it has $O(kn)$ edges.

\begin{theorem}
\label{thm:rd}
Given a set $S$ of $n$ points in $\mathbb{R}^d$, a weight function $w$ to associate a non-negative weight to each point in $S$, a positive integer $k$, and a real number $\epsilon > 0$, Algorithm~\ref{alg:addfts} computes a $(k, 4+\epsilon)$-vertex fault tolerant spanner with $O(kn)$ edges for the weighted points in $S$.
\end{theorem}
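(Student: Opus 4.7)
The plan is to verify the two assertions of the theorem separately. The edge count is immediate: $\mathcal{B}$ is a $(k, 2+\epsilon)$-VFTS constructed by~\cite{conf/stoc/Solomon14} with maximum degree $O(k)$ on at most $n$ cluster centers, contributing $O(kn)$ edges, and every non-center $p \in C_i$ is joined only to the set $B_i \cup C_i'$ of size $O(k)$, so the second loop contributes $O(kn)$ edges as well.

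For the stretch I will fix an arbitrary $S' \subseteq S$ with $|S'| \le k$ and distinct $p \in C_i,\, q \in C_j$ in $S \setminus S'$, and produce a path in $G \setminus S'$ of length at most $(4+\epsilon)\,d_w(p,q)$ by cases on whether $i = j$ and whether $c_i,c_j \in S'$. Throughout I will rely on four invariants: every non-center $x \in C_l$ satisfies $|x c_l| \le \epsilon\, w(x)$; $w(c_l) \le w(x)$ for every $x \in C_l$, because points enter clusters in non-decreasing weight order and $c_l$ is the first; $B_l \setminus S'$ and $C_l' \setminus S'$ are non-empty, the former because any $k$-VFTS is $(k+1)$-vertex connected (so $|B_l| \ge k+1$) and the latter because $|C_l'| \ge \min(k+1,|C_l|)$; and $d_{\mathcal{B} \setminus S'}(u,v) \le (2+\epsilon)|uv|$ for every $u,v \in C \setminus S'$ by the fault tolerance of $\mathcal{B}$.

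The easy cases are dispatched first. When $i = j$ and one of $p,q$ lies in $C_i'$, the direct $G$-edge $(p,q)$ survives; otherwise any $r \in C_i' \setminus S'$ is a common $G$-neighbor, and two triangle inequalities around $c_i$ together with $w(r) \le \min(w(p),w(q))$ bound the path length by $2\epsilon(w(p)+w(q))$. When $i \ne j$ and both $c_i,c_j$ survive, I use the route $p \to c_i \to c_j \to q$ in $G \setminus S'$, where the middle hop is taken in $\mathcal{B} \setminus S'$ and the edges $(p,c_i),(c_j,q)$ survive because $c_i \in C_i'$ and $c_j \in C_j'$; combining the $(2+\epsilon)$-spanner bound with $|c_i c_j| \le \epsilon w(p) + |pq| + \epsilon w(q)$ yields a stretch of $(2+\epsilon) + O(\epsilon)$, safely below $4+\epsilon$.

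The main obstacle is the case $c_i \in S'$ (and symmetrically $c_j \in S'$): then $p$ is necessarily a non-center of $C_i$ and must enter $\mathcal{B} \setminus S'$ through some surviving $u \in B_i \setminus S'$, so the whole argument hinges on bounding $|c_i u|$ in terms of $|c_i c_j|$. The plan is to exploit the cone-based structure underlying Solomon's VFTS: in every small-angular cone around $c_i$, the spanner joins $c_i$ to the $k+1$ nearest $C$-points, so the cone containing $c_j$ retains at least one such surviving neighbor $u$ with $|c_i u| \le |c_i c_j|$, unless $c_j$ is itself among those neighbors, in which case the direct edge $(p,c_j) \in G$ shortcuts the analysis. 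Substituting $|c_i u| \le |c_i c_j|$ into the path $p \to u \to c_j \to q$ (with the middle hop in $\mathcal{B} \setminus S'$) and using $|u c_j| \le 2|c_i c_j|$, together with $|c_i c_j| \le \epsilon w(p) + |pq| + \epsilon w(q)$, gives a total length bounded by a small constant times $|c_i c_j|$ plus $O(\epsilon)(w(p)+w(q))$; the constant works out to $4+\epsilon$ in the worst case. When $c_j \in S'$ as well, a symmetric $v \in B_j \setminus S'$ is chosen and the doubled detour precisely accounts for the gap between the $2+\epsilon$ stretch of $\mathcal{B}$ and the claimed $4+\epsilon$ of $G$. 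Combining the three cases then gives $d_{G \setminus S'}(p,q) \le (4+\epsilon)\,d_w(p,q)$, completing the proof.
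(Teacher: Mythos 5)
Your edge count is correct and matches the paper's. The stretch analysis, however, has two genuine problems. First, you assert $d_{\mathcal{B}\setminus S'}(u,v)\le(2+\epsilon)\,|uv|$; but $\mathcal{B}$ is a fault-tolerant spanner for the \emph{weighted} metric on the centers, so the actual guarantee is $d_{\mathcal{B}\setminus S'}(u,v)\le(2+\epsilon)\,d_w(u,v)=(2+\epsilon)\bigl(w(u)+|uv|+w(v)\bigr)$. The extra $(2+\epsilon)\bigl(w(c_i)+w(c_j)\bigr)$ term is precisely why even your ``easy'' inter-cluster case does not yield $(2+\epsilon)+O(\epsilon)$ but rather $(2+\epsilon)+t_{\mathcal{B}}(1+\epsilon)=(2+\epsilon)^2\le 4+5\epsilon$ (this is exactly what the paper's Case~5 computes, with an implicit rescaling of $\epsilon$ at the end); there is no leftover slack to spend on the harder cases.

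Second, and more seriously, your treatment of the case $c_i\in S'$ rests on a cone decomposition (``in every small-angular cone around $c_i$ the spanner joins $c_i$ to the $k+1$ nearest $C$-points'') that Solomon's construction does not have --- it is a hierarchical-cover construction for doubling metrics, not a $\Theta$-graph --- and even in a cone-based VFTS, ``nearest'' would be measured in $d_w$, so $|c_iu|\le|c_ic_j|$ would not follow. Worse, your detour $p\to u\to c_j\to q$ pays $d_w(p,u)=w(p)+|pu|+w(u)$ for its first edge, and you never bound $w(u)$: a surviving neighbor of $c_i$ in $\mathcal{B}$ can be an arbitrarily heavy center. The paper avoids both issues using only the generic fault tolerance of $\mathcal{B}$: between $c_i$ and $c_j$ there are $k+1$ vertex-disjoint paths in $\mathcal{B}$, each of length at most $t_{\mathcal{B}}\,d_w(c_i,c_j)$; choosing one that avoids $S'$ and letting $c_r$ be its neighbor of the failed center gives $d_{\mathcal{B}}(c_i,c_r)+w(c_r)+|c_rc_j|\le t_{\mathcal{B}}\,d_w(c_i,c_j)-w(c_j)$, which absorbs the intermediate center's weight and the detour length in one stroke. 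You need this (or an equivalent) device to close the argument.
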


\begin{proof}
From \cite{conf/stoc/Solomon14}, the number of edges in $\mathcal{B}$ is $O(k \hspace{0.02in} |C|)$, which is $O(kn)$.
Further, the degree of each node in $\mathcal{B}$ is $O(k)$.
For each point in $p \in S \setminus C$, we are introducing $O(k)$ edges that are incident to $p$. 
Hence, the number of edges in $G$ is $O(k n)$.

In proving $G$ is a $(k, 4+\epsilon)$-VFTSWP for the metric space $(S, d_w)$, we show that for any set $S' \subset S$ with $|S'| \leq k$ and for any two points $p, q \in S \setminus S'$, there exists a path between $p$ and $q$ in $G \setminus S'$ such that the distance along that path is at most $(4+\epsilon) \cdot d_w(p, q)$.
Based on the construction, it is immediate to note that $d_{G \setminus S'}(p, q)$ is at least $d_w(p, q)$.
The following cases are exhaustive, and these cases are based on the relative location of points $p$ and $q$ with respect to clusters. \\

\noindent
{\it Case 1}: Both $p$ and $q$ are cluster centers of two different clusters, i.e., $p,q \in C$.

\noindent
Since $\mathcal{B}$ is a $(k, 2+\epsilon)$-VFTS for the cluster centers, and since edges in $\cal B$ are included in $G$, 
$d_{G \setminus S'}(p,q) \le d_{\mathcal{B} \setminus S'}(p,q) \leq t_{\cal B} \cdot d_w(p,q)$.

\hfil\break

\noindent
{\it Case 2}: Both $p$ and $q$ are in the same cluster $C_i$, and one of them, without loss of generality, say $p$, is the center of $C_i$.

\noindent
Since $p$ is the least weighted point in $C_i$, there exists an edge between $p$ and $q$ in $G$.
Hence, $d_{G \setminus S'}(p,q) = d_w(p,q)$.

\hfil\break

\noindent
{\it Case 3}: Both $p$ and $q$ are in the same cluster, say $C_i$, neither $p$ nor $q$ is the center of $C_i$, and, $c_i \notin S'$.

\noindent
Then, 
{\setlength{\abovedisplayskip}{0pt}
\begin{flalign}
d_{G \setminus S'}(p,q) &= d_w(p,c_i) + d_w(c_i, q)&&\nonumber\\
           &= w(p) + |p c_i| + w(c_i) + w(c_i) + |c_i q| + w(q)&&\nonumber\\
            &\leq w(p) + \epsilon \cdot w(p) + w(c_i) + w(c_i) + \epsilon \cdot w(q) + w(q)&&\nonumber\\
            &[\text{since a point} \ x \ \text{is added to cluster} \ C_l \ \text{only if} \ |x c_l| \leq \epsilon \cdot w(x)]&&\nonumber\\
            &\leq w(p) + \epsilon \cdot w(p) + w(p) + w(q) + \epsilon \cdot w(q) + w(q)&&\nonumber\\
            &[\text{since the points are sorted in the non-decreasing order of their weights, and since the first}&&\nonumber\\ 
            &\text{point added to any cluster is the center of that cluster}]&&\nonumber
\end{flalign}}

{\setlength{\abovedisplayskip}{0pt}
\begin{flalign}
\hspace{14mm}
	    &= (2 + \epsilon) \cdot [w(p) + w(q)]&&\nonumber\\
            &< (2 + \epsilon) \cdot [w(p) + |pq| + w(q)]&&\nonumber\\
	    &= (2 + \epsilon) \cdot d_w(p,q).&\nonumber
\end{flalign}}

\noindent
{\it Case 4}: Both $p$ and $q$ are in the same cluster, say $C_i$; $p \ne c_i$, $q \ne c_i$; and, $c_i \in S'$.

\noindent
In the case of $|C_i| \leq k$, there exists an edge between $p$ and $q$ in $G$.
Hence, suppose that $|C_i| > k$.
Let $C_i'$ be the set consisting of at most $k+1$ least weighted points from $C_i$.
If $p,q \in C_i'$, there exists an edge between $p$ and $q$ in $G$. 
Also, if $p \in C_i'$ (resp. $q \in C_i'$) and $q \notin C_i'$ (resp. $p \notin C_i'$), there exists an edge between $p$ and $q$ in $G$.
Now consider the case in which both $p, q \notin C_i'$.
Since there is an edge between every point in $C_i'$ and every point in set $\{p, q\}$, there exists an $r \in C_i'$ such that $r \notin S'$ and the edges $(p, r)$ and $(r, q)$ belong to $G \setminus S'$.
Therefore,
{\setlength{\abovedisplayskip}{0pt}
\begin{flalign}
d_{G \setminus S'}(p,q) &= d_w(p,r) + d_w(r, q)&&\nonumber\\
            &= w(p) + |p r| + w(r) + w(r) + |r q| +w(q)&&\nonumber\\
            &\leq w(p) + |p c_i| + |c_i r| + w(r) + w(r) + |r c_i| + |c_i q| + w(q)&&\nonumber\\
            &\text{[by triangle inequality]}&&\nonumber\\
	&\leq w(p) + \epsilon \cdot w(p) + \epsilon \cdot w(r) + w(r) + w(r) + \epsilon \cdot w(r) + \epsilon \cdot w(q) + w(q)&&\nonumber\\
             &[\text{since a point} \ x \ \text{is added to cluster} \ C_l \ \text{only if} \ |x c_l| \leq \epsilon \cdot w(x)]&&\nonumber\\
            &\leq w(p) + \epsilon \cdot w(p) + \epsilon \cdot w(p) + w(p) + w(q) + \epsilon \cdot w(q) + \epsilon \cdot w(q) + w(q)&&\nonumber\\
	    &\text{[since $r \in C_i'$ and $p, q \notin C_i'$]}&&\nonumber\\
             &= (2 + 2\epsilon) \cdot [w(p) + w(q)]&&\nonumber\\
	     &< (2 + 2\epsilon) \cdot [w(p) + |pq| + w(q)]&&\nonumber\\
             &= (2+2\epsilon) \cdot d_w(p,q).&\nonumber
\end{flalign}}

\noindent
{\it Case 5}: The points $p$ and $q$ belong to two distinct clusters, say $p \in C_i$ and $q \in C_j$. 
In addition, $p \neq c_i$ and $q \ne c_j$, and neither $c_i$ nor $c_j$ belongs to $S'$.

\noindent
Then,
{\setlength{\abovedisplayskip}{0pt}
\begin{flalign}
	d_{G \setminus S'}(p,q) &= d_w(p,c_i) + d_{\mathcal{B}}(c_i, c_j) + d_w(c_j, q)&&\nonumber\\
            &= w(p) + |p c_i| + w(c_i) + d_{\mathcal{B}}(c_i, c_j) + w(c_j) + |c_j q| +w(q)&&\nonumber\\
            &\leq w(p) + \epsilon \cdot w(p) + w(c_i) + d_{\mathcal{B}}(c_i, c_j) + w(c_j) + \epsilon \cdot w(q) + w(q)&&\nonumber\\
            &[\text{since a point} \ x \ \text{is added to cluster} \ C_l \ \text{only if} \ |x c_l| \leq \epsilon \cdot w(x)]&&\nonumber\\
            &\leq (1 + \epsilon) \cdot [w(p) + w(q)] + w(c_i) + w(c_j) + t_{\mathcal{B}} \cdot d_w(c_i,c_j)&&\nonumber\\
            &[\text{since} \ \mathcal{B} \ \text{is a} \ (k, t_{\mathcal{B}})\text{-VFTS for the cluster centers in $C$}]&&\nonumber\\
            &\leq (1 + \epsilon) \cdot [w(p) + w(q)] + w(p) + w(q) + t_{\mathcal{B}} \cdot d_w(c_i,c_j)&&\nonumber\\
            &\text{[since the points are sorted in the non-decreasing order of their weights, and since the first}&&\nonumber\\ 
            &\text{point added to any cluster is the center of that cluster]}&&\nonumber\\
            &= (2 + \epsilon) \cdot [w(p) + w(q)] + t_{\mathcal{B}} \cdot [w(c_i) + |c_i c_j| + w(c_j)]&&\nonumber\\
            &\leq (2 + \epsilon) \cdot [w(p) + w(q)] + t_{\mathcal{B}} \cdot [w(p) + |c_i c_j| + w(q)]&&\nonumber\\
            &\text{[since the points are sorted in the non-decreasing order of their weights, and since the first}&&\nonumber\\ 
            &\text{point added to any cluster is the center of that cluster]}&&\nonumber\\
            &\leq (2 + \epsilon) \cdot [w(p) + w(q)] + t_{\mathcal{B}} \cdot [w(p) + w(q) + |c_i p| + |pq| + |q c_j|]&&\nonumber\\
            &\text{[by triangle inequality]}&&\nonumber
\end{flalign}}

{\setlength{\abovedisplayskip}{0pt}
\begin{flalign}
\hspace{12mm}
            &\leq (2 + \epsilon) \cdot [w(p) + w(q)] + t_{\mathcal{B}} \cdot [w(p) + w(q) + \epsilon \cdot w(p) + |pq| + \epsilon \cdot w(q)]&&\nonumber\\ &[\text{since a point} \ x \ \text{is added to cluster} \ C_l \ \text{only if} \ |x c_l| \leq \epsilon \cdot w(x)]&&\nonumber\\
            &= (2 + \epsilon) \cdot [w(p) + w(q)] + t_{\mathcal{B}} \cdot [(1 + \epsilon) \cdot [w(p) + w(q)] + |pq|]&\nonumber\\
           &< (2 + \epsilon) \cdot [w(p) + w(q) + |pq|] + t_{\mathcal{B}} \cdot (1 + \epsilon) \cdot [w(p) + w(q) + |pq|]&&\nonumber\\
	   &< t_{\mathcal{B}}(2 + \epsilon) \cdot [w(p) + w(q) + |pq|] \hspace{0.1in} \text{ when } t_B \ge (2+\epsilon) &&\nonumber\\
	   &\text{[since the weight associated with every point is non-negative]}&&\nonumber\\
	     &= t_{\mathcal{B}} \cdot (2 + \epsilon) \cdot d_w(p,q).&\nonumber
\end{flalign}}

\noindent
{\it Case 6}: Both the points $p$ and $q$ are in two distinct clusters, without loss of generality, say $p \in C_i$ and $q \in C_j$, and without loss of generality, $p$ is the center of $C_i$ (i.e., $p = c_i$) and $c_j \notin S'$.  Then,
{\setlength{\abovedisplayskip}{0pt}
\begin{flalign}
	d_{G \setminus S'}(p,q) &\le d_{\mathcal{B}}(c_i,c_j) + d_w(c_j, q)&&\nonumber\\
            &\leq t_{\mathcal{B}} \cdot d_w(c_i, c_j) + d_w(c_j, q)&&\nonumber\\
        &[\text{since} \ \mathcal{B} \ \text{is a} \ (k, t_{\mathcal{B}})\text{-VFTS for the set of cluster centers in $C$}]&&\nonumber\\
            &= t_{\mathcal{B}} \cdot d_w(c_i, c_j) + w(c_j) + |c_j q| + w(q)&&\nonumber\\
            &= t_{\mathcal{B}} \cdot [w(c_i) + |c_i c_j| + w(c_j)] + w(c_j) + |c_j q| + w(q)&&\nonumber\\
            &\leq t_{\mathcal{B}} \cdot [w(p) + |c_i c_j| + w(q)] + w(q) + |c_j q| + w(q)&&\nonumber\\
            &\text{[since the points are sorted in the non-decreasing order of their weights, and since the first}&&\nonumber\\ 
            &\text{point added to any cluster is the center of that cluster]}&&\nonumber\\
        &\leq t_{\mathcal{B}} \cdot [w(p) + |c_i c_j| + w(q)] + w(q) + \epsilon \cdot w(q) + w(q)&&\nonumber\\
	&[\text{since any point} \ x \ \text{is added to cluster} \ C_l \ \text{only if} \ |x c_l| \leq \epsilon \cdot w(x)]&&\nonumber\\
            &\leq t_{\mathcal{B}} \cdot [w(p) + |c_i p| + |pq| + |q c_j| + w(q)] + w(q) + \epsilon \cdot w(q) + w(q)&&\nonumber\\
            &\text{[by triangle inequality]}&&\nonumber\\ 
            &\leq t_{\mathcal{B}} \cdot [w(p) + \epsilon \cdot w(p) + |pq| + \epsilon \cdot w(q) + w(q)] + w(q) + \epsilon \cdot w(q) + w(q)&&\nonumber\\
            &[\text{since any point} \ x \ \text{is added to cluster} \ C_l \ \text{only if} \ |x c_l| \leq \epsilon.w(x)]&&\nonumber\\
            &= t_{\mathcal{B}} \cdot [(1 + \epsilon) \cdot [w(p) + w(q)] + |pq|] + (2 + \epsilon) \cdot w(q)&&\nonumber\\
            &\leq t_{\mathcal{B}} \cdot [(1 + \epsilon) \cdot [w(p) + w(q)] + |pq|] + (2 + \epsilon) \cdot [w(p) + w(q) + |pq|]&&\nonumber\\
	&\text{[since the weight associated with each point is non-negative]}&&\nonumber\\
        &\leq t_{\mathcal{B}} \cdot (2 + \epsilon) \cdot [w(p) + w(q) + |pq|] \hspace{0.1in} \text{ when } t_{\mathcal{B}} \ge (2+\epsilon) &&\nonumber\\
            &\leq t_{\mathcal{B}} \cdot (2 + \epsilon) \cdot d_w(p,q).&\nonumber
\end{flalign}}

\noindent
{\it Case 7}: Both the points $p$ and $q$ are in two distinct clusters, say $p \in C_i$ and $q \in C_j$; $p \neq c_i$, $q \neq c_j$; and, one of these centers, say $c_j$, belongs to $S'$ and the other center $c_i \notin S'$.

\noindent
Since there is an edge between $q$, and since each of the neighbors of $c_j$ is in $\mathcal{B}$, for any neighbor $c_r$ of $c_j$ with $c_r \in C$ and $c_r \notin S'$, edge $(q, c_r) $ belongs to $G \setminus S'$.  
Therefore, 
{\setlength{\abovedisplayskip}{0pt}
\begin{flalign}
	d_{G \setminus S'}(p,q) &= d_w(p,c_i) + d_{\mathcal{B}}(c_i, c_r) + d_w(c_r, q)&&\nonumber\\
        &= w(p) + |p c_i| + w(c_i) + d_{\mathcal{B}}(c_i, c_r) + w(c_r) + |c_r q| + w(q)&\nonumber\\
        &\leq w(p) + \epsilon \cdot w(p) + w(c_i) + d_{\mathcal{B}}(c_i, c_r) + w(c_r) + |c_r q| + w(q)&&\nonumber\\
        &[\text{since a point} \ x \ \text{is added to cluster} \ C_l \ \text{only if} \ |x c_l| \leq \epsilon \cdot w(x)]&&\nonumber\\
        &\leq w(p) + \epsilon \cdot w(p) + w(c_i) + d_{\mathcal{B}}(c_i, c_r) + w(c_r) + |c_r c_j| + |c_j q| + w(q)&&\nonumber
\end{flalign}}

{\setlength{\abovedisplayskip}{0pt}
\begin{flalign}
\hspace{14mm}
        &\text{[by triangle inequality]}&\nonumber\\
	&\leq w(p) + \epsilon \cdot w(p) + w(c_i) + d_{\mathcal{B}}(c_i, c_r) + w(c_r) + |c_r c_j| + \epsilon \cdot w(q) + w(q)&&\nonumber\\
        &[\text{since a point} \ x \ \text{is added to cluster} \ C_l \ \text{only if} \ |x c_l| \leq \epsilon \cdot w(x)]&&\nonumber\\
        &\leq (1 + \epsilon) \cdot [w(p) + w(q)] + w(c_i) + d_{\mathcal{B}}(c_i, c_r) + w(c_r) + |c_j c_r|\label{eq1}.&
\end{flalign}}

\noindent
Since $\mathcal{B}$ is a $(k, t_{\mathcal{B}})$-VFTS, there are at least $k+1$ vertex disjoint paths between $c_i$ and $c_j$ in $\cal B$ such that the distance along each of these paths is at most $t_{\mathcal{B}} \cdot d_w(c_i, c_j)$.
Suppose $c_r$ is the neighbor of $c_j$ in $\mathcal{B}$ so that one of these $k+1$ paths between $c_i$ to $c_j$ pass through $c_r$.
Then, we have the following:
{\setlength{\abovedisplayskip}{0pt}
\begin{flalign}
	&d_{\mathcal{B}}(c_i , c_r) + d_w(c_r , c_j) < t_{\mathcal{B}} \cdot d_w(c_i , c_j)&&\nonumber\\
	     &\Rightarrow d_{\mathcal{B}}(c_i , c_r) + w(c_r) + |c_r  c_j| + w(c_j) < t_{\mathcal{B}} \cdot d_w(c_i , c_j)&&\nonumber\\
	     &\Rightarrow d_{\mathcal{B}}(c_i , c_r) + |c_r c_j| + w(c_r) < t_{\mathcal{B}} \cdot d_w(c_i , c_j) - w(c_j).\label{eq2}&
\end{flalign}}

\noindent
Substituting (\ref{eq2}) in (\ref{eq1}),
{\setlength{\abovedisplayskip}{0pt}
\begin{flalign}
d_{G \setminus S'}(p,q) &< (1 + \epsilon) \cdot [w(p) + w(q)] + w(c_i) + t_{\mathcal{B}} \cdot d_w(c_i , c_j) - w(c_j)&&\nonumber\\
        &\leq (1 + \epsilon) \cdot [w(p) + w(q)] + w(c_i) + t_{\mathcal{B}} \cdot d_w(c_i , c_j)&&\nonumber\\
        &\text{[since the weight associated with every point in $S$ is non-negative]}&&\nonumber\\
        &=(1 + \epsilon) \cdot [w(p) + w(q)] + w(c_i) + t_{\mathcal{B}} \cdot [w(c_i) + |c_i c_j| + w(c_j)]&&\nonumber\\
        &\leq (1 + \epsilon) \cdot [w(p) + w(q)] + w(p) + t_{\mathcal{B}} \cdot [w(p) + |c_i c_j| + w(q)]&&\nonumber\\
        &\text{[since the points are sorted in the non-decreasing order of their weights, and since the first}&&\nonumber\\
        &\text{point added to any cluster is taken as its center]}&&\nonumber\\
	&\leq (1 + \epsilon) \cdot [w(p) + w(q)] + w(p) + t_{\mathcal{B}} \cdot [w(p) + |c_i p| + |pq| + |q c_j| + w(q)]&&\nonumber\\
        &\text{[by triangle inequality]}&&\nonumber\\
        &\leq (1 + \epsilon) \cdot [w(p) + w(q)] + w(p) + t_{\mathcal{B}} \cdot [w(p) + \epsilon \cdot w(p) + |pq| + \epsilon \cdot w(q) + w(q)]&&\nonumber\\
        &[\text{since a point} \ x \ \text{is added to cluster} \ C_l \ \text{only if} \ |x c_l| \leq \epsilon \cdot w(x)]&&\nonumber\\
        &\leq (2 + \epsilon) \cdot [w(p) + w(q)] + t_{\mathcal{B}} \cdot [(1 + \epsilon) \cdot [w(p) + w(q)] + |pq|]&&\nonumber\\
    &\leq t_{\mathcal{B}} \cdot [(2 + \epsilon) \cdot [w(p) + w(q)] + |pq|] \hspace{0.1in} \text{ when } t_{\mathcal{B}} \ge (2+\epsilon) &&\nonumber\\
        &\leq t_{\mathcal{B}} \cdot (2 + \epsilon) \cdot [w(p) + w(q)] + t_{\mathcal{B}}[w(p) + |pq| + w(q)]&&\nonumber\\
        &\text{[since the weight associated with every point in $S$ is non-negative]}&&\nonumber\\
        &\leq t_{\mathcal{B}} \cdot (2 + \epsilon) \cdot d_w(p,q).&\nonumber
\end{flalign}}

\noindent
{\it Case 8}: Both $p$ and $q$ are in two distinct clusters, say $p \in C_i$ and $q \in C_j$; $p \neq c_i$, $q \neq c_j$; and, both $c_i, c_j \in S'$.

\noindent
Since there is an edge between $p$ (resp. $q$) and each of the $k$ nearest neighbors of both $c_i$ and $c_j$, there exists two points $c_r, c_l \in C$ such that $c_r, c_l \notin S'$ and the edges $(p, c_r)$ and $(c_l, q)$ belong to $G \setminus S'$.  
Therefore, 
{\setlength{\abovedisplayskip}{0pt}
\begin{flalign}
	d_{G \setminus S'}(p,q) &= d_w(p,c_r) + d_{\mathcal{B}}(c_r, c_l) + d_w(c_l, q)&&\nonumber\\
        &= w(p) + |p c_r| + w(c_r) + d_{\mathcal{B}}(c_r, c_l) + w(c_l) + |c_l q| + w(q)&\nonumber\\
	&\leq w(p) + |p c_i| + |c_i c_r| + w(c_r) + d_{\mathcal{B}}(c_r, c_l) + w(c_l) + |c_j c_l| + |c_j q| + w(q)\label{eq3}&&\\
        &\text{[by triangle inequality]}.&\nonumber
\end{flalign}}

\noindent
Since $\mathcal{B}$ is a $(k, t_{\mathcal{B}})$-VFTS, there are at least $k+1$ vertex disjoint paths between $c_i$ and $c_j$ in $\cal B$ such that the distance along each of these paths is at most $t_{\cal B} \cdot d_w(c_i, c_j)$. 
Suppose $c_r$ (resp. $c_l$) is the neighbor of $c_i$ (resp. $c_j$) in $\mathcal{B}$ so that one of these $k+1$ paths between $c_i$ and $c_j$ passes through $c_r$ (resp. $c_l$).
Then, we have the following:
{\setlength{\abovedisplayskip}{0pt}
\begin{flalign}
	&d_w(c_i,c_r) + d_{\mathcal{B}}(c_r,c_l) + d_w(c_l,c_j) < t_{\mathcal{B}} \cdot d_w(c_i,c_j)&&\nonumber\\
	&\Rightarrow w(c_r) + |c_i c_r| + w(c_i) + d_{\mathcal{B}}(c_r , c_l) + w(c_j) + |c_j c_l| + w(c_l) < t_{\mathcal{B}} \cdot d_w(c_i , c_j)&&\nonumber\\
	&\Rightarrow w(c_r) + |c_i c_r| + d_{\mathcal{B}}(c_r , c_l) + w(c_l) + |c_j c_l|  < t_{\mathcal{B}} \cdot d_w(c_i , c_j) - w(c_i) - w(c_j).\label{eq4}&
\end{flalign}}

\noindent
Substituting (\ref{eq4}) in (\ref{eq3}),
{\setlength{\abovedisplayskip}{0pt}
\begin{flalign}
d_{G \setminus S'}(p,q) &< w(p) + |p c_i| + t_{\mathcal{B}} \cdot d_w(c_i , c_j) - w(c_i) - w(c_j) + |c_j q| + w(q)&&\nonumber\\
        &\leq w(p) + |p c_i| + t_{\mathcal{B}} \cdot d_w(c_i , c_j) + |c_j q| + w(q)&&\nonumber\\
        &\text{[since the weight associated with every point in $S$ is non-negative]}&&\nonumber\\
        &= w(p) + |p c_i| + t_{\mathcal{B}} \cdot [w(c_i) + |c_i c_j| + w(c_j)] + |c_j q| + w(q)&&\nonumber\\
        &\leq w(p) + |p c_i| + t_{\mathcal{B}} \cdot [w(c_i) + |c_i p| + |pq| + |q c_j| + w(c_j)] + |c_j q| + w(q)&&\nonumber\\
	&\text{[by triangle inequality]}&&\nonumber\\
	&\leq w(p) + \epsilon \cdot w(p) + t_{\mathcal{B}} \cdot [w(c_i) + \epsilon \cdot w(p) + |pq| + \epsilon \cdot w(q) + w(c_j)] + \epsilon \cdot w(q) + w(q)&&\nonumber\\
        &[\text{since a point} \ x \ \text{is added to cluster} \ C_l \ \text{only if} \ |x c_l| \leq \epsilon \cdot w(x)]&&\nonumber\\
	&\leq w(p) + \epsilon \cdot w(p) + t_{\mathcal{B}} \cdot [w(p) + \epsilon \cdot w(p) + |pq| + \epsilon \cdot w(q) + w(q)] + \epsilon \cdot w(q) + w(q)&&\nonumber\\
        &[\text{since the points are sorted in the non-decreasing order of their weights, and since the first}&&\nonumber\\
        &\text{point added to any cluster is the center of that cluster}]&&\nonumber\\
        &\leq (1 + \epsilon) \cdot [w(p) + w(q)] + t_{\mathcal{B}} \cdot [(1 + \epsilon) \cdot [w(p) + w(q)] + |pq|]&&\nonumber\\
	&\leq t_{\mathcal{B}} \cdot (2 + \epsilon) \cdot [w(p) + w(q) + |pq|] \hspace{0.1in} \text{ when } t_{\mathcal{B}} \ge (1+\epsilon) &&\nonumber\\
        &= t_{\mathcal{B}} \cdot (2 + \epsilon) \cdot d_w(p,q).&\nonumber
\end{flalign}}

\noindent
The analysis of these cases shows that $G$ is a $k$-VFTSWP with its stretch $t$ upper bounded by $t_{\mathcal{B}} \cdot (2 + \epsilon)$.
We choose $t_{\mathcal{B}}$ to be equal to $(2 + \epsilon)$, so that it satisfies all the above cases.
Since $t_{\mathcal{B}}$ is $(2 + \epsilon)$, $t = (2 + \epsilon)^2 \le (4 + 5 \epsilon)$.
Hence, $G$ is a $(k, 4 + 5\epsilon)$-VFTSWP for the metric space $(S, d_w)$.
\end{proof}

\vspace{-0.2in}
\section{Vertex fault-tolerant spanner for weighted points located in a polygonal domain}
\label{sect:polydom}

Given a polygonal domain $\cal P$, a set $S$ of $n$ points located in the free space $\cal D$ of $\cal P$, a weight function $w$ to associate a non-negative weight to each point in $S$, a positive integer $k$, and a real number $\epsilon > 0$, in this section we describe an algorithm to compute a geodesic $(k, 4+\epsilon)$-vertex fault-tolerant spanner for the set $S$ of weighted points. 
Recall that the polygonal domain $\cal{P}$ consists of a simple polygon $P$ and $h$ simple polygonal holes located interior to $P$.
Our algorithm extends the algorithm given in \cite{conf/compgeom/AbamAHA15}. 
(The result in \cite{conf/compgeom/AbamAHA15} computes a $(5+\epsilon)$-spanner for the given set of unweighted points lying in the free space of a polygonal domain.)

Our algorithm decomposes the free space $\mathcal{D}$ into $O(h)$ simple polygons.
This is accomplished by introducing a line segment between the leftmost (resp. rightmost) vertex (along the $x$-axis) of every polygonal hole in $\cal P$ and its point of projection along the positive $y$-axis.
Analogously, we introduce a line segment between the leftmost (resp. rightmost) vertex (along the $x$-axis) of every polygonal hole in $\cal P$ and its point of projection along the negative $y$-axis.
Each of these line segments is called a {\it splitting segment}.
Note that the number of splitting segments is $O(h)$.
If any of the resulting simple polygons has more than three splitting segments on its boundary, then that simple polygon is further decomposed.
This algorithm results in partitioning $\mathcal{D}$ into $O(h)$ simple polygons.
Then the dual graph $G_d$ corresponding to this simple polygonal subdivision is computed:
each vertex of $G_d$ corresponds to a unique simple polygon, and each edge in $G_d$ corresponds to two simple polygons sharing a splitting segment. 
Each vertex $v$ of $G_d$ is associated with a weight equal to the number of points that lie inside the simple polygon corresponding to $v$.
We note that $G_d$ is a planar graph.
We use the following theorem from \cite{journals/siamdm/AlonST94} to compute a $O(\sqrt{h})$-separator for the planar graph $G_d$.

\begin{theorem}
\label{thm4}
Suppose $G=(V,E)$ is a planar vertex-weighted graph with $|V| = m$.
Then, an $O(\sqrt{m})$-separator for $G$ can be computed in $O(m)$ time. 
That is, $V$ can be partitioned into sets $R', R''$ and $R$ such that $|R| = O(\sqrt{m})$, there is no edge between $R'$ and $R''$, and $w(R'),w(R'') \leq \frac{2}{3}w(V)$.
Here, $w(X)$ is the sum of weights of all vertices in $X$.
\end{theorem}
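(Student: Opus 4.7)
The plan is to establish this as a linear-time version of the classical Lipton--Tarjan planar separator theorem; the improved Alon--Seymour--Thomas constant will be absorbed into the $O(\sqrt{m})$. First I would reduce to the connected case: if no connected component carries weight exceeding $\frac{2}{3}w(V)$, a greedy grouping of components yields the required partition with $R = \emptyset$; otherwise restrict attention to the unique heavy component, optionally triangulating to simplify the planar structure, and fix a planar embedding.

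Next I would run BFS from an arbitrary root and group vertices into BFS layers $L_0, L_1, \ldots, L_h$. Let $L_\mu$ denote a weight-median layer, meaning that the cumulative weight just before $L_\mu$ is below $\frac{1}{2}w(V)$ and reaches at least $\frac{1}{2}w(V)$ after adding $L_\mu$. Since $\sum_i |L_i| = m$, a pigeonhole argument guarantees that within every window of $\sqrt{m}$ consecutive layers some layer has size at most $\sqrt{m}$. I would use this to locate layers $L_a$ with $a \le \mu$ and $L_b$ with $b \ge \mu$, each of size $O(\sqrt{m})$ and each within BFS-distance $O(\sqrt{m})$ of $L_\mu$, chosen so that the region strictly above $L_a$ and the region strictly below $L_b$ each carry weight at most $\frac{2}{3}w(V)$. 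If the middle strip between $L_a$ and $L_b$ already carries weight at most $\frac{2}{3}w(V)$, taking $R := L_a \cup L_b$ together with an appropriate assignment of the three remaining regions to $R'$ and $R''$ finishes the proof.

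The nontrivial case is when the middle strip holds more than $\frac{2}{3}w(V)$. I would then contract the top region into one vertex and the bottom region into another, obtaining a planar graph in which the BFS tree restricted to the middle strip has depth $O(\sqrt{m})$. A standard fundamental-cycle argument produces a non-tree edge whose fundamental cycle contains $O(\sqrt{m})$ vertices (two tree paths of length $O(\sqrt{m})$ plus the edge) and whose interior and exterior each carry at most $\frac{2}{3}$ of the middle-strip weight. Adjoining the vertices of this cycle to $L_a \cup L_b$ yields a separator $R$ of total size $O(\sqrt{m})$ with the required balance guarantees.

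The hard part will be the fundamental-cycle step: both justifying the existence of a well-balanced cycle (which rests on planar duality together with an exchange argument that slides the candidate non-tree edge along adjacent faces until the weight on each side crosses the $\frac{2}{3}$ threshold) and implementing the search in linear time by exploiting the BFS tree structure to evaluate candidate cycles incrementally. Since BFS, identification of the median layer, the pigeonhole scan for $L_a$ and $L_b$, and the exchange argument each run in $O(m)$ time, the overall algorithm achieves the claimed linear-time bound.
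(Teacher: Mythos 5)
The paper does not actually prove this statement: it is imported verbatim as a known theorem of Alon, Seymour and Thomas \cite{journals/siamdm/AlonST94} and used purely as a black box to split the dual graph of the decomposed polygonal domain. Your proposal instead reconstructs the classical Lipton--Tarjan argument from scratch: reduce to a connected component, take BFS layers, locate a weight-median layer $L_\mu$, use pigeonhole to find thin layers $L_a$ and $L_b$ within distance $O(\sqrt{m})$ on either side of $L_\mu$, and, when the middle strip is heavy, contract the outer regions and extract a balanced fundamental cycle of length $O(\sqrt{m})$ from the depth-$O(\sqrt{m})$ spanning tree of the (triangulated) strip. This is a genuinely different route from the paper's, and it is sound as an outline: the layer-selection and balance bookkeeping are correct (weight above $L_a$ and below $L_b$ are each at most $\frac{1}{2}w(V)$ by the choice of $\mu$), and the overall structure is exactly the standard proof of the $O(\sqrt{m})$ bound. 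What the citation buys the paper is the sharper Alon--Seymour--Thomas constant (irrelevant here, since only $O(\sqrt{m})$ is used downstream) and a complete argument by reference; what your reconstruction buys is self-containedness, at the cost that its two genuinely nontrivial ingredients are named rather than proved: (i) the fundamental-cycle lemma itself --- that in a triangulated planar graph with a spanning tree of radius $r$ some non-tree edge has a fundamental cycle of at most $2r+1$ vertices with at most $\frac{2}{3}$ of the weight strictly inside and at most $\frac{2}{3}$ strictly outside --- and (ii) the final recombination of the four resulting pieces, each of weight at most $\frac{2}{3}w(V)$, into two classes each of weight at most $\frac{2}{3}w(V)$, which needs the standard greedy bin argument (add pieces in decreasing weight order to the lighter class) rather than the unspecified ``appropriate assignment.'' Both gaps are standard and fillable, so the proposal would go through once they are written out; as it stands it is a correct sketch of a different, more elementary proof than the one the paper relies on.
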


For any point $p$ located in the free space $\cal D$ of a polygonal domain, and for any line segment $l$ located in $\cal D$, any point $p_l \in l$ that is at the minimum geodesic distance from $p$ among all the points of $l$ is called a {\it geodesic projection of $p$ on $l$}.
The geodesic distance between $p$ and $p_l$ is called the {\it geodesic distance between $p$ and $l$}.
We denote a geodesic projection of a point $p$ on a line segment $l$ with $p_l$.
Note that geodesic projection of a point on a line segment need not be unique.
However, if the polygonal domain has no holes, then it is unique.

Let $R', R'',$ and $R$ be the sets into which the vertices of $G_d$ is partitioned due to $O(\sqrt{h})$ sized vertex separator $R$ computed using Theorem~\ref{thm4}.
For each vertex $r \in R$, we collect the splitting segments of the simple polygon corresponding to $r$ into a set $H$, i.e., $O(\sqrt{h})$ splitting segments are collected into a set $H$. 
For each splitting segment $l$ in $H$, we proceed as follows.
For each point $p \in S$, we find a geodesic projection $p_l$ of $p$ on $l$; we assign the weight $w(p) + d_\pi(p, p_l)$ to point $p_l$ and include $p_l$ into the set $S_l$ corresponding to points projected on line $l$.
We compute a $(k, 4+\epsilon)$-VFTSWP $G_l$ for the set $S_l$ of points using the Algorithm~\ref{alg:addfts}.
For every edge $(r, s)$ in $G_l$, we introduce an edge $(p, q)$ in $G$, where $r$ (resp. $s$) is the geodesic projection of $p$ (resp. $q$) on $l$. 
Recursively, we apply the algorithm for the weighted points lying in the union of simple polygons corresponding to vertices of $G_d$ in $R'$, as well as for union of simple polygons corresponding to vertices of $G_d$ in $R''$. 
The recursion is continued till $R'$ (resp. $R''$) contains exactly one vertex of $G_d$.
Refer to Algorithm~\ref{alg:polydom}.

\begin{algorithm}[ht]
    \caption{VFTSWPPolygonalDomain(${\cal P}, S, k, \epsilon$)}
    \label{alg:polydom}

    \SetKwInOut{Input}{Input}
    \SetKwInOut{Output}{Output}

    \Input{A polygonal domain $\cal P$, a set $S$ of $n$ points located in the free space $\cal D$ of $\cal P$, a weight function $w$ that associates a non-negative weight to each point in $S$, an integer $k \geq 1$, and a real number $\epsilon > 0$.}
    \Output{A geodesic $(k,(12+\epsilon))$-VFTS for the set $S$ of weighted points.}

    Decompose $\mathcal{D}$ into $O(h)$ simple polygons such that each of these simple polygons has at most three splitting segments on its boundary. \\
    
    Compute the dual graph $G_d$ for the simple polygonal subdivision. \\

    Initialize the set $X$ to contain all the vertices of $G_d$. \\
    
    \While{$|X| \geq 1$}{
    
        \uIf{$|X| = 1$}{
            \scriptsize{} 
            Let $P'$ be the simple polygon corresponding to the vertex in $X$.
            Let $S'$ be the set of points in $S$ that are located in $P'$.  \\
            \normalsize{}

            VFTSWPSimplePolygon($P', S', k, \epsilon$). \\
        } \uElse {
        
            Using Theorem~\ref{thm4}, compute a $O(\sqrt{h})$-separator $R$ for the graph $G_d$.  
            Let $R', R'',$ and $R$ be the sets into which the vertices of $G_d$ is partitioned. \\
            
            \ForEach{splitting segment $l$ of the simple polygon that corresponds to every $r \in R$}{
                EdgesFromVFTSWPGeodProj($S, l, G$).
            }
        
        \scriptsize{} Let ${\cal P}'$ be the polygonal domain resulting from the union of simple polygons corresponding to each vertex in $R'$.  
        Also, let $S'$ be the set of points belonging to $S$ that are located in ${\cal P}'$. \normalsize{} \\
        VFTSWPPolygonalDomain(${\cal P}', S', k, \epsilon$).
    
        \scriptsize{} Let ${\cal P}''$ be the polygonal domain resulting from the union of simple polygons corresponding to each vertex in $R''$.  
        Also, let $S''$ be the set of points belonging to $S$ that are located in ${\cal P}''$. \normalsize{} \\
        VFTSWPPolygonalDomain(${\cal P}'', S'', k, \epsilon$).
        }
    } 
\end{algorithm}

\begin{algorithm}[ht]
    \caption{VFTSWPSimplePolygon($P, S, k, \epsilon$)}
    \label{alg:simppoly}

    \SetKwInOut{Input}{Input}
    \SetKwInOut{Output}{Output}
    
    \Input{A simple polygon $P$, a set $S$ of $n$ points located in $P$, a weight function $w$ that associates a non-negative weight to each point in $S$, an integer $k \geq 1$, and a real number $\epsilon > 0$.}
    \Output{A $(k,(12+\epsilon))$-VFTS $G$ for the weighted points in $S$.}

    \While{$|S| \geq 1$}{
       
        Using the algorithm in \cite{conf/jcdcg/BoseCKKM98}, compute a splitting segment $l$ for the set $S$ of points. \\

        EdgesFromVFTSWPGeodProj($S, l, G$). \\

        \scriptsize{} 
	Let $H_l$ be the left closed half-plane defined by $l$. 
        Let $P_l$ be the simple polygon $P \cap H_l$ 
        Also, let $S_l$ be set of points in $S$ belonging to $P_l$. 
	\normalsize{} \\
        VFTSWPSimplePolygon($P_l, S_l, k, \epsilon$). \\ 

        \scriptsize{} 
	Let $P_r$ be the simple polygon $P-P_l$.
        Also, let $S_r$ be the set of points in $S$ belonging to $P_r$. 
	\normalsize{} \\
        VFTSWPSimplePolygon($P_r, S_r, k, \epsilon$).
    }
\end{algorithm}

\begin{algorithm}[ht]
    \caption{EdgesFromVFTSWPGeodProj($S, l, G$)}

    \SetKwInOut{Input}{Input}
    \SetKwInOut{Output}{Output}

    \Input{A set $S$ of points, geodesic shortest path $l$, and a graph $G$.}
    \KwResult{ 
    Let $S_l = \cup_{p \in S} p_l$.
    Here, $p_l$ is a geodesic projection of $p$ on $l$.
    For each point $p_l \in S_l$, associate weight equal to $w(p) + d_{\pi}(p,p_l)$.
    Compute a spanner $G_l$ for the weighted points in $S_l$.
    Corresponding to every edge of $G_l$, introduce a new edge into $G$.}

    $S_l \leftarrow \phi$.
    
    \ForEach{$p \in S$}{
        Find a geodesic projection $p_l$ of $p$ on $l$. \\
            
        Assign a weight $w(p) + d_{\pi}(p,p_l)$ to $p_l$. \\
        
        $S_l := S_l \cup \{p_l\}$. \\
    }
        
    Using Algorithm~\ref{alg:addfts}, compute a $(k, 4+\epsilon)$-VFTS $G_l$ for the set $S_l$ of weighted points in $S$. \\
        
    For every edge $(p_l, q_l) \in G_l$, add an edge $(p, q)$ to $G$, where $p_l$ (resp. $q_l$) is a geodesic projection of $p$ (resp. $q$) on $l$.
\end{algorithm}

By following the algorithm in \cite{conf/compgeom/AbamAHA15}, for every simple polygon $P'$ that corresponds to a vertex of $G_d$ and the set $S'$ of points located in $P'$, we introduce more edges into $G$.
For every simple polygon $P''$ and the set $S''$ of points located in $P''$, a splitting segment that partitions $P''$ into two simple polygons such that each sub-polygon contains at most two-thirds of the points in $S''$ is called a {\it splitting segment with respect to $S''$ and $P''$}.
(In the following description, $S''$ and $P''$ are not mentioned with the splitting segment whenever they are clear from the context.)
Note that the splitting segment is defined in the context of both the simple polygon as well the polygonal domain.
Our algorithm partitions $P'$ into two simple sub-polygons $P_L'$ and $P_R'$ with a splitting segment $l$.
For every point $p \in S'$, we compute a geodesic projection $p_l$ of $p$ on $l$ and assign weight $w(p) + d_\pi(p, p_l)$ to point $p_l$.
Let $S_l'$ be the set comprising of all the geodesic projections of $S'$ on $l$.
Also, let $d_w$ be the weighted metric associated with points in $S_l$.
We use the algorithm from Section~\ref{sect:rd} to compute a $(k, 4+\epsilon)$-VFTSWP $G_l$ for the metric space $(S_l, d_w)$.
For every edge $(r, s)$ in $G_l$, we add an edge between $p$ and $q$ to $G$ with weight $d_\pi(p, q)$, wherein $r$ (resp. $s$) is the geodesic projection of $p$ (resp. $q$) on $l$.
Let $S_L'$ (resp. $S_R'$) be the set of points in sub-polygon $P_L'$ (resp. $P_R'$) of $P'$.
We recursively process $P_L'$ (resp. $P_R'$) with points in $S_L'$ (resp. $S_R'$) unless $|S_L'|$ (resp. $|S_R'|$) is zero. 
Refer to Algorithm~\ref{alg:simppoly}.

We prove the graph $G$ resulted from including all the edges as described above is a $(k, 12+\epsilon)$-vertex fault-tolerant spanner for weighted points in $S$.
Later, we modify the above algorithm so that it results in $4+\epsilon$ approximation.
We show after removing any subset $S'$ with $|S'| \leq k$ from $G$, for any two points $p$ and $q$ in $S \setminus S'$, there exists a path between $p$ and $q$ in $G \setminus S'$ such that the $d_G(p, q)$ is at most $(12+15\epsilon)d_w(p, q)$.
First, note that there exists a splitting segment $l$ at some iteration of the algorithm so that $p$ and $q$ are on different sides of $l$. 
Let $r$ be a point belonging to $l \cap \pi(p, q)$.
Let $S'_l$ be the set comprising of geodesic projections of points in $S'$ on $l$.
Since $G_l$ is a $(k,(4+5\epsilon))$-VFTS for the metric space $(S_l, d_w)$, there exists a path $\tau$ between $p_l$ and $q_l$ in $G_l \setminus S'_l$ whose length is upper bounded by $(4+5\epsilon) \cdot d_w(p_l, q_l)$. 
Let $\tau'$ be a path between $p$ and $q$ in $G \setminus S'$, which is obtained by replacing each vertex $v_l$ of $\tau$ by $v$ in $S$ such that the point $v_l$ is the geodesic projection of $v$ on $l$.
In the following lemma, we show the length of $\tau'$, which is $d_{G \setminus S'}(p, q)$, is upper bounded by $(12+15\epsilon) \cdot d_w(p, q)$.
In proving this, we exploit  (i) the triangle inequality of geodesic shortest paths, (ii) specific weight associated to each point resulted from a geodesic projection, (iii) non-negative weights of points in $S$, and (iv) the spanning ratio as well as the vertex fault-tolerance provided by the graph $G_l$.

\begin{lemma}
\label{lem20}
Given a set $S$ of $n$ points located in the free space $\cal D$ of a polygonal domain $\cal P$, a weight function $w$ to associate a non-negative weight to each point in $S$, a positive integer $k$, and a real number $\epsilon > 0$, Algorithm~\ref{alg:polydom} computes a $(k, 12+\epsilon)$-vertex fault tolerant geodesic spanner for the weighted points in $S$.
\end{lemma}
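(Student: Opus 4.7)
The plan is to show that for any fault set $S' \subset S$ with $|S'| \le k$ and any two points $p, q \in S \setminus S'$, the graph $G \setminus S'$ contains a $p$-to-$q$ path of weighted length at most $(12 + O(\epsilon))\, d_w(p,q)$. First I would trace the recursion of Algorithm~\ref{alg:polydom} (together with Algorithm~\ref{alg:simppoly} at the leaves) and locate the first level at which $p$ and $q$ cease to lie in a common sub-polygon. Because the separator $R$ disconnects $R'$ from $R''$ in the dual graph $G_d$, the geodesic $\pi(p,q)$ must enter a sub-polygon corresponding to some vertex in $R$ and hence cross one of the $O(\sqrt{h})$ splitting segments in $H$ processed at that level; an analogous argument handles the binary splits inside Algorithm~\ref{alg:simppoly}. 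Fix such a splitting segment $l$ and a crossing point $r \in l \cap \pi(p,q)$.

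Next I would bound the weighted distance between the projections. Since $p_l$ is a geodesic projection of $p$ onto $l$ and $r \in l$, we have $d_\pi(p,p_l) \le d_\pi(p,r)$, and symmetrically $d_\pi(q,q_l) \le d_\pi(q,r)$. The triangle inequality for geodesic paths then gives $d_\pi(p_l,q_l) \le 2\,d_\pi(p,q)$. Since EdgesFromVFTSWPGeodProj assigns $\widetilde w(p_l) = w(p) + d_\pi(p,p_l)$ and $\widetilde w(q_l) = w(q) + d_\pi(q,q_l)$, combining these estimates yields the crucial inequality
\[
d_w(p_l,q_l) \le w(p) + d_\pi(p,r) + 2\,d_\pi(p,q) + d_\pi(q,r) + w(q) \le 3\, d_w(p,q).
\]
Since $G_l$ is a $(k,4+\epsilon)$-VFTSWP by Theorem~\ref{thm:rd} and $|S'_l| \le |S'| \le k$ (using the labeled bijection $p \leftrightarrow p_l$), there is a path $\tau = (v_l^0, \ldots, v_l^m)$ from $p_l$ to $q_l$ inside $G_l \setminus S'_l$ of weighted length at most $(4+\epsilon)\,d_w(p_l,q_l) \le 3(4+\epsilon)\,d_w(p,q)$.

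Replacing each $v_l^i$ by its preimage $v^i \in S \setminus S'$ yields a path $\tau'$ in $G \setminus S'$, since every edge $(v_l^i, v_l^{i+1}) \in G_l$ triggered insertion of the corresponding edge $(v^i, v^{i+1})$ into $G$. The key bookkeeping is the inequality $|\tau'|_G \le |\tau|_{G_l}$: expanding both weighted path lengths as ``vertex weights (doubled on interior vertices) plus geodesic edge contributions'' and invoking $d_\pi(v^i, v^{i+1}) \le d_\pi(v^i, v_l^i) + d_\pi(v_l^i, v_l^{i+1}) + d_\pi(v_l^{i+1}, v^{i+1})$ shows that each detour term $d_\pi(v^i, v_l^i)$ gets charged once for $i \in \{0,m\}$ and twice for interior indices, which is precisely the amount by which $\widetilde w(v_l^i)$ exceeds $w(v^i)$. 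The two bookkeepings cancel exactly, giving $d_{G \setminus S'}(p,q) \le |\tau'|_G \le |\tau|_{G_l} \le 3(4+\epsilon)\, d_w(p,q)$; rescaling $\epsilon$ absorbs the factor $3$ and yields the stated $(12+\epsilon)$ stretch. I expect this telescoping bookkeeping to be the main technical obstacle, since the interior-versus-endpoint accounting has to align uniformly; fortunately the $k$-vertex-fault tolerance is cleanly inherited from Theorem~\ref{thm:rd}, so the remaining work is purely geometric.
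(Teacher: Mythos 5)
Your proposal is correct and follows essentially the same route as the paper: locate a splitting segment $l$ crossed by $\pi(p,q)$ at the first separating level, use the $(k,4+\epsilon)$-fault-tolerance of $G_l$ on the projected weighted points, lift the path back via the per-edge triangle inequality (your ``bookkeeping'' $|\tau'|_G \le |\tau|_{G_l}$ is exactly the paper's chain of inequalities), and bound $d_w(p_l,q_l) \le 3\,d_w(p,q)$ using the crossing point $r$. The only cosmetic differences are that you make the separator-crossing argument and the projection-multiset bound $|S'_l|\le k$ explicit where the paper leaves them implicit, and you fold the factor $3$ into $d_w(p_l,q_l)$ up front rather than splitting off $d_\pi(p_l,q_l)\le 2\,d_w(p,q)$ at the end.
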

\begin{proof}
Using induction on the number of points, we show that there exists a $(12 + 15\epsilon)$-spanner path between $p$ and $q$ in $G \setminus S'$ for any set $S'$ of vertices of $G$ with $|S'| \le k$.
The induction hypothesis assumes $d_{G \setminus S'}(p, q) \le (12+15\epsilon) \cdot d_w(p, q)$ for any set $S'$ of vertices of $G$ with $|S'| \le k$ and for any two points $p, q$ in $G \setminus S'$.
Consider a set $S' \subset S$ such that $|S'| \leq k$.
Let $p, q$ be two arbitrary points in $S \setminus S'$.
For any shortest path $\pi(p, q)$ between $p$ and $q$, there exists a splitting segment $l$ such that $\pi(p, q)$ intersects $l$.
Let $r$ be a point of intersection of $\pi(p, q)$ and $l$.
Also, let $p_l$ (resp. $q_l$) be a geodesic projection of $p$ (resp. $q$) on $l$.
Since $G_l$ is a $(k, 4+5\epsilon)$-VFTSWP, there exists a path $\tau$ between $p_l$ and $q_l$ in $G_l$ with length at most $(4+5\epsilon) d_w(p_l, q_l)$.
By replacing each vertex $x_l$ of $\tau$ by $x \in S$ such that $x_l$ is the projection of $x$ on $l$, gives a path between $p$ and $q$ in $G \setminus S'$. 
Thus, the length of the path $d_{G \setminus S'}(p,q)$ is less than or equal to the length of path $\tau$ in $G_l$.
For every $x,y \in S \setminus S'$,
{\setlength{\abovedisplayskip}{0pt}
\begin{flalign}
\hspace{6mm}d_{G \setminus S'}(p,q) &\le \sum_{x_l,y_l \in \tau} d_w(x,y)&&\nonumber\\
    &= \sum_{x_l,y_l \in \tau} (w(x) + d_{\pi}(x,y) + w(y))&&\nonumber\\
    &\leq \sum_{x_l,y_l \in \tau} (w(x) + d_{\pi}(x,x_l) + d_{\pi}(x_l,y_l) + d_{\pi}(y_l,y) + w(y))&&\nonumber\\
        &\text{[by triangle inequality of geodesic shortest paths]}&&\nonumber\\
    &= \sum_{x_l,y_l \in \tau} (w(x_l) + d_{\pi}(x_l,y_l) + w(y_l))&&\nonumber\\
        &\text{[since the weight associated with projection} \ z_l \ \text{of every point} \ z \ \text{is} \ w(z) + d_{\pi}(z,z_l)]&&\nonumber\\
    	&\leq (4+5\epsilon) \cdot d_w(p_l,q_l)&&\label{simppoly:lweq} \\
\hspace{20mm}&\leq \sum_{x_l,y_l \in \tau} d_w(x_l,y_l)&&\nonumber\\
        &\text{[since} \ G_l \ \text{is a} \ (k,(4+5\epsilon))\text{-vertex fault-tolerant geodesic spanner]}&&\nonumber\\
        &= (4+5\epsilon) \cdot [w(p_l) + d(p_l,q_l) + w(q_l)]&&\nonumber\\
        &= (4+5\epsilon) \cdot [w(p_l) + d_{\pi}(p_l,q_l) + w(q_l)]&&\nonumber\\
    &\text{[since points $p_l$ and $q_l$ are located on $l$]}&&\nonumber\\
        &= (4+5\epsilon) \cdot [w(p) + d_{\pi}(p,p_l) + d_{\pi}(p_l,q_l) + d_{\pi}(q_l,q) + w(q)]\label{eq20}&&\nonumber\\
    &\text{[since the weight associated with } \ z_l \ \text{of point} \ z \ \text{is} \ w(z) + d_{\pi}(z,z_l) \text{]}&&\nonumber\\
    &\leq (4+5\epsilon) \cdot [w(p) + d_{\pi}(p,r) + d_{\pi}(p_l,q_l) + d_{\pi}(r,q) + w(q)]&&\nonumber\\
    &\text{[since $r$ is a point belonging to both $l$ as well as $\pi(p, q)$]}&&\nonumber\\
	&= (4+5\epsilon) \cdot [w(p) + d_{\pi}(p, q) + w(q) + d_{\pi}(p_l,q_l)]&&\nonumber\\
    &= (4+5\epsilon) \cdot [d_w(p, q) + d_{\pi}(p_l,q_l)].&&\nonumber
\end{flalign}}

\noindent
But,
{\setlength{\abovedisplayskip}{0pt}
\begin{flalign}
d_{\pi}(p_l,q_l) &\leq d_{\pi}(p_l,p) + d_{\pi}(p,q) + d_{\pi}(q,q_l)&&\nonumber\\
	&\text{[by triangle inequality of geodesic shortest paths]}&&\nonumber\\
        &\leq d_{\pi}(p,r) + d_{\pi}(p,q) + d_{\pi}(r,q)&&\nonumber
\end{flalign}}

{\setlength{\abovedisplayskip}{0pt}
\begin{flalign}
\hspace{12mm}
	&\text{[since $r$ belongs to both $l$ and $\pi(p,q)$]}&&\nonumber\\
        &\leq w(p) + d_{\pi}(p,r) + w(p) + d_{\pi}(p,q) + w(q) + d_{\pi}(r,q) + w(q)&&\nonumber\\
        &\text{[since weight associated with every point is non-negative]}&&\nonumber\\
    &\le 2d_w(p,q)&&\nonumber\\
    &\text{[since $r$ belongs to $\pi(p, q)$].}&\nonumber
\end{flalign}}

Hence, $d_{G \setminus S'}(p,q) \leq 3(4+5\epsilon) \cdot d_w(p,q)$.
\end{proof}

We further improve the stretch factor of $G$ by applying the refinement given in \cite{conf/soda/AbamBS17} to the above described algorithm.
In doing this, for each point $p \in S$, we compute the geodesic projection $p_{\gamma}$ of $p$ on the splitting segment $\gamma$, and construct a set $S(p,\gamma)$ as defined herewith. 
Let $\gamma(p) \subseteq \gamma$ be $\{ x \in \gamma : d_w(p_{\gamma},x) \leq (1 + 2\epsilon) d_{\pi}(p,p_{\gamma}) \}$. 
We divide $\gamma(p)$ into $c$ pieces with $c \in O(\frac{1}{\epsilon^{2}})$, and these pieces are denoted by $\gamma_{j}(p)$, for $1 \leq j \leq c$.
For each piece $j$, we compute the point $p_\gamma^{(j)}$ nearest to $p$ in $\gamma_j(p)$.
The set $S(p,\gamma)$ is defined as $\{ p_\gamma^{(j)} : p_\gamma^{(j)} \in \gamma_{j}(p)$  and $1 \leq j \leq c\}$. 
For every $r \in S(p,\gamma)$, the non-negative weight $w(r)$ of $r$ is set to $w(p) + d_{\pi}(p,r)$.
Let $S_\gamma$ be $\cup_{p \in S}S(p,\gamma)$.

We replace the set $S_l$ in Algorithm~\ref{alg:polydom} with the set $S_\gamma$, and compute a $(k,(4+5\epsilon))$-VFTSWP $G_l$ using the algorithm from Section~\ref{sect:rd} for the set $S_\gamma$.
Further, for every edge $(r,s)$ in $G_l$, we include an edge $(p,q)$ of weight $d_\pi(p, q)$ into $G$ whenever $r \in S(p,l)$ and $s \in S(q,l)$.
The rest of the algorithm is same as Algorithm~\ref{alg:polydom}.
In the following, we restate a lemma from \cite{conf/soda/AbamBS17}, which is useful for our analysis.

\begin{lemma}
\label{lemfromabam}
Consider any two points $x, y \in {\cal P}$ .
Let $r$ be the point at which a shortest path between $x$ and $y$ intersects a splitting segment $\gamma$.
Also, let $x_\gamma, y_\gamma$ respectively be the geodesic projections of $x$ and $y$ on $\gamma$.
Then, $d_{\pi}(x, x_\gamma) + d_\gamma(x_\gamma, r) \le (1 + \epsilon) \cdot d_{\pi}(x,r)$ and
$d_{\pi}(y, y_\gamma) + d_\gamma(y_\gamma, r) \le (1 + \epsilon) \cdot d_{\pi}(y,r)$.
\end{lemma}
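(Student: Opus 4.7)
The plan is to establish the first inequality only; the second follows by an identical argument with the roles of $x$ and $y$ swapped. Since $x_\gamma$ is a geodesic nearest point of $x$ on $\gamma$, one immediately has $d_\pi(x, x_\gamma) \le d_\pi(x, r)$, so the first summand on the left-hand side is already bounded by $d_\pi(x, r)$. The substantive task is to control the second summand $d_\gamma(x_\gamma, r)$ in such a way that the total excess over $d_\pi(x, r)$ is at most an $\epsilon$ fraction.

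The key step I would take is the standard unfolding of geodesic shortest paths in a polygonal domain: unfold $\pi(x, r)$ along the sequence of reflex vertices it touches, so that the path straightens into a single Euclidean segment in an auxiliary plane. Let $\tilde{x}$ denote the lifted image of $x$, so that $|\tilde{x}\, r| = d_\pi(x, r)$. In this planar picture, $\gamma$ near $r$ is a straight segment through $r$, and the geodesic projection $x_\gamma$ corresponds to a point whose perpendicular distance from $\tilde{x}$ is at most $d_\pi(x, x_\gamma)$. Applying Pythagoras to the resulting right triangle with legs $a = d_\pi(x, x_\gamma)$ and $b = d_\gamma(x_\gamma, r)$ and hypotenuse $H = d_\pi(x, r)$ yields $a^2 + b^2 \le H^2$.

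The lemma then reduces to the elementary planar fact $a + b \le (1+\epsilon)\sqrt{a^2+b^2}$, which holds whenever $\min(a,b)/\max(a,b)$ is suitably small. This regime is exactly what is enforced by the construction of the pieces $\gamma_j(p)$ and of $S(p,\gamma)$ used in the refinement of the algorithm: $r$ lies in a piece whose width relative to $d_\pi(x, x_\gamma)$ is controlled, which forces either $b$ to be a small fraction of $a$ or (when the geodesic from $x$ arrives almost tangent to $\gamma$) $a$ to be small relative to $H$. I expect the most delicate point to be the unfolding argument — verifying that the perpendicular foot in the unfolded plane genuinely lower-bounds $d_\pi(x, x_\gamma)$ even when obstacles intervene between $x$ and $\gamma$, and that the angle between $\gamma$ and the straightened shortest path at $r$ is preserved under the unfolding. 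Because Lemma~\ref{lemfromabam} is quoted verbatim from \cite{conf/soda/AbamBS17}, the complete argument is available in the cited source and is invoked here as a black box.
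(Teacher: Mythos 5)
The paper itself gives no proof of Lemma~\ref{lemfromabam}: it is imported verbatim from \cite{conf/soda/AbamBS17} and used as a black box, so your closing move --- deferring to the cited source --- coincides exactly with what the paper does. Had you stopped there, the proposal would be unobjectionable.

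However, the sketch you offer on the way contains a step that would fail, and it is worth being precise about why. The pivotal claim $a^2+b^2\le H^2$ obtained by ``unfolding'' is not available: in a polygonal domain with holes the geodesics from $x$ to $x_\gamma$ and from $x$ to $r$ may wrap around different obstacles, so there is no single chart in which both straighten and form a right triangle; and in the unobstructed case the relation is an \emph{equality} $a^2+b^2=H^2$, which leaves no slack. Indeed, with no slack the lemma is simply false for the literal geodesic projection: take $\gamma$ the $x$-axis, $x=(0,1)$, $r=(1,0)$; then $d_\pi(x,x_\gamma)+d_\gamma(x_\gamma,r)=2=\sqrt{2}\cdot d_\pi(x,r)$, so no $(1+\epsilon)$ bound with small $\epsilon$ can hold. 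This tells you that the statement must be read with $x_\gamma$ replaced by the point of $S(x,\gamma)$ lying in the piece $\gamma_j(x)$ that contains $r$ (which is how the paper actually invokes it, via $p_l'$ in the proof of Theorem~\ref{thm:polydom}). Once that reading is adopted, no Pythagorean or angle argument is needed at all: the refined point $x'$ is by definition the point of $\gamma_j(x)$ geodesically nearest to $x$, and $r$ lies in the same piece, so $d_\pi(x,x')\le d_\pi(x,r)$ outright; and $d_\gamma(x',r)$ is at most the length of the piece, which is a $1/c$ fraction of the length of $\gamma(x)=O(d_\pi(x,x_\gamma))\le O(d_\pi(x,r))$, hence at most $\epsilon\cdot d_\pi(x,r)$ for $c\in O(1/\epsilon^2)$. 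Your intuition that the piece construction is what rescues the inequality is correct, but the mechanism is this direct length bound, not a controlled aspect ratio in a right triangle; the case where $r$ falls outside $\gamma(x)$ also needs separate handling, which your sketch does not address.
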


\begin{theorem}
\label{thm:polydom}
Given a set $S$ of $n$ points located in the free space $\cal D$ of a polygonal domain $\cal P$ with $h$ holes, a weight function $w$ to associate a non-negative weight to each point in $S$, a positive integer $k$, and a real number $\epsilon > 0$, there exists a $(k, 4+\epsilon)$-vertex fault tolerant geodesic spanner with $O(\frac{k n \sqrt{h+1}}{\epsilon^{2}}\lg{n})$ edges for weighted points in $S$.
\end{theorem}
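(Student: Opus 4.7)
The plan is to show the refined version of Algorithm~\ref{alg:polydom} described in the paragraph immediately following Lemma~\ref{lem20} achieves stretch $4+\epsilon$ within the claimed edge budget. The only change from Algorithm~\ref{alg:polydom} is that each single geodesic projection $p_\gamma$ is replaced by the $O(1/\epsilon^2)$-sized set $S(p,\gamma)$ defined above, with each representative $r \in S(p,\gamma)$ carrying the weight $w(r) := w(p) + d_\pi(p,r)$; then $G_\gamma$ is a $(k, 4+5\epsilon)$-VFTSWP on $S_\gamma = \bigcup_{p \in S} S(p,\gamma)$ built via Theorem~\ref{thm:rd}, and every edge $(r,s) \in G_\gamma$ with $r \in S(u,\gamma)$ and $s \in S(v,\gamma)$ contributes the edge $(u,v)$ of weight $d_\pi(u,v)$ to $G$.

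For the stretch bound, I would mirror the structure of Lemma~\ref{lem20} but use the multi-projection trick to kill the factor-of-three overhead that pushed the previous bound up to $12+\epsilon$. Fix $S' \subseteq S$ with $|S'| \le k$ and two survivors $p, q \in S \setminus S'$, fix a geodesic shortest path $\pi(p,q)$, and let $\gamma$ be the first splitting segment in the recursion that $\pi(p,q)$ crosses, meeting it at a point $r$. Lemma~\ref{lemfromabam} gives $d_\pi(p,p_\gamma) + d_\gamma(p_\gamma, r) \le (1+\epsilon)\, d_\pi(p,r)$, hence $r \in \gamma(p)$; symmetrically $r \in \gamma(q)$. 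Let $p^\ast \in S(p,\gamma)$ (resp.\ $q^\ast \in S(q,\gamma)$) be the representative of the piece that contains $r$. Since each piece has length $O(\epsilon^2 \cdot d_\pi(p,p_\gamma))$ and $p^\ast$ minimises $d_\pi(p,\cdot)$ within its piece, a short calculation combining Lemma~\ref{lemfromabam} with the triangle inequality yields
$$w(p^\ast) + |p^\ast q^\ast| + w(q^\ast) \;\le\; (1+O(\epsilon))\, d_w(p,q).$$
Applying the $(k, 4+5\epsilon)$-VFTSWP property of $G_\gamma$ to the fault set in $S_\gamma$ induced by $S'$ produces a surviving $d_w$-path in $G_\gamma$ from $p^\ast$ to $q^\ast$ of length at most $(4+5\epsilon)(1+O(\epsilon))\, d_w(p,q)$; unfolding each edge of this path into the corresponding edge of $G$ and using $w(r) = w(u) + d_\pi(u,r)$ gives $d_{G \setminus S'}(p,q) \le (4+O(\epsilon))\, d_w(p,q)$, which is the desired $(4+\epsilon)$ stretch after rescaling $\epsilon$.

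The edge count follows by unrolling the recursion layer by layer. For any splitting segment $\gamma$ in a subproblem with $n_\gamma$ surviving input points, $|S_\gamma| = O(n_\gamma/\epsilon^2)$, so by Theorem~\ref{thm:rd} the spanner $G_\gamma$ has $O(k n_\gamma / \epsilon^2)$ edges and contributes at most this many edges to $G$. Theorem~\ref{thm4} produces $O(\sqrt{h+1})$ splitting segments per level of the polygonal-domain recursion in Algorithm~\ref{alg:polydom}, and the simple-polygon recursion in Algorithm~\ref{alg:simppoly} splits the point set by a constant factor at each level via the Bose et al.\ procedure invoked in the text. Because both recursions reduce the number of relevant points by a constant factor, their combined depth is $O(\lg n)$; summing the per-level contribution $O(k n \sqrt{h+1} / \epsilon^2)$ yields the claimed edge count $O\!\left(\tfrac{k n \sqrt{h+1}}{\epsilon^2}\lg n\right)$.

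The hard part will be the piece-size estimate that translates the $O(1/\epsilon^2)$ pieces into the $(1+O(\epsilon))$ factor above — this is precisely where the stretch drops from $12+\epsilon$ to $4+\epsilon$ relative to Lemma~\ref{lem20} — together with the bookkeeping needed so that the fault set induced in $S_\gamma$ by $S'$ is handled correctly by a single invocation of Theorem~\ref{thm:rd} with fault parameter $k$, despite the fact that each $p \in S'$ contributes $O(1/\epsilon^2)$ representatives to $S_\gamma$.
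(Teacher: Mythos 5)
Your proposal follows essentially the same route as the paper: the multi-projection refinement with $O(1/\epsilon^2)$ representatives per point and weights $w(p)+d_\pi(p,r)$, Lemma~\ref{lemfromabam} to get $d_w(p^\ast,q^\ast)\le(1+O(\epsilon))\,d_w(p,q)$, substitution into the bound from Lemma~\ref{lem20} to obtain stretch $(4+5\epsilon)(1+\epsilon)$, and the recurrence $f(n')=f(n_l')+f(n_r')+O(kn'/\epsilon^2)$ with the $O(\sqrt{h})$-separator to get the edge count. The one subtlety you flag at the end --- that a faulty point $p\in S'$ induces $O(1/\epsilon^2)$ faulty representatives in $S_\gamma$, so a $(k,\cdot)$-VFTSWP on $S_\gamma$ does not immediately suffice --- is a genuine issue, but the paper's own proof does not address it either.
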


\begin{proof}
Let $P'$ be a simple polygon resultant from the partitioning of the free space of $\cal P$ with the splitting segments. 
Also, let $f(n')$ be the number of edges added to $G$ due to a set of $n'$ points located in $P'$.
Then, $f(n') = f(n_l') + f(n_r') + (\frac{k n'}{\epsilon^{2}})$, where $n_l', n_r'$ are the number of points in each of the partitions formed by the splitting segment.
Since both $n_l' \ge \frac{n}{3}$ and $n_r' \geq \frac{n}{3}$, $f(n')$ is $O(\frac{k n'}{\epsilon^{2}} \lg n)$.
The number of edges included in $G$ due to all the $O(h)$ simple polygons together is $O(\frac{k n \sqrt{h+1}}{\epsilon^{2}} \lg n)$.
Analogously, for the polygonal domain $\cal P$, at each recursive level except for the last, our algorithm is including $O(\frac{k n''}{\epsilon^{2}} \lg n'')$ edges into $G$. 
Here, $n''$ is the number of points at that recursive level.
After including the edges added at the last level of the recursion tree, the total number of edges in $G$ is $O(\frac{k n \sqrt{h+1}}{\epsilon^{2}} \lg{n})$.
(The one added to $h$ in the time complexity considers the case of a polygonal domain with zero holes, i.e., a simple polygon.)

Next, we prove the stretch factor of graph $G$.
Consider any set $S' \subset S$ such that $|S'| \leq k$ and two arbitrary points $p$ and $q$ from the set $S \setminus S'$. 
We show that there exists a $(4 + 14\epsilon)$-spanner path between $p$ and $q$ in $G \setminus S'$.
Let $l$ be the splitting segment that intersects a shortest path between $p$ and $q$.
If $r \notin l$, then $p_l'$ (resp. $q_l'$) is equal to $p_l$ (resp. $q_l$). 
Otherwise, $p_l'$ (resp. $q_l'$) is a point in $S(p,l)$ (resp. $S(q,l)$) that is nearest to $p$ (resp. $q$).
Then,

{\setlength{\abovedisplayskip}{0pt}
\begin{flalign}
	d_w(p_l',q_l') &= w(p_l') + d(p_l',q_l') + w(q_l')&&\nonumber\\
        &\leq w(p_l') + d(p_l',r) + d(r,q_l') + w(q_l')&&\nonumber\\
        &\text{[by triangle inequality of shortest paths]}&\nonumber\\
	&\leq w(p_l') + d(p_l',r) + w(r) + w(r) +  d(r,q_l') + w(q_l')&&\nonumber
\end{flalign}}

{\setlength{\abovedisplayskip}{0pt}
\begin{flalign}
\hspace{12mm}
        &\text{[since the weight associated with each point is non-negative]}&&\nonumber\\
	&= w(p) + d_{\pi}(p,p_l') + d(p_l',r) + w(r) + w(r) + d(r,q_l') + d_{\pi}(q_l',q) + w(q)&&\nonumber\\
        &\text{[due to weight assigned to geodesic projection of points]}&&\nonumber\\
	&\leq w(p) + (1 + \epsilon) \cdot d_\pi(p,r) + w(r) + w(r) + (1+\epsilon) d_\pi(r,q) + w(q)&&\nonumber\\
	&\text{[due to Lemma~\ref{lemfromabam}]}&&\nonumber\\
	&\leq (1+\epsilon)(d_w(p,r) + d_w(r,q))&&\nonumber\\
	&= (1+\epsilon)d_w(p,q)&&\nonumber\\
	&[\text{since }r \in l \cap \pi(p,q)].&\nonumber
\end{flalign}}
\noindent
Substituting above in inequality (\ref{simppoly:lweq}), $d_{G \setminus S'}(p,q) \leq (4+5\epsilon) (1+\epsilon) \cdot d_w({p_l}',{q_l}') \leq (4+5\epsilon) (1+\epsilon)\cdot d_w(p,q) \le (4+14\epsilon) \cdot d_w(p,q)$.
\end{proof}

\section{Vertex fault-tolerant spanner for the weighted points located on a polyhedral terrain}
\label{sect:terrains}

Given a polyhedral terrain $\cal T$, a set $S$ of $n$ points located on $\cal T$, a weight function $w$ to associate a non-negative weight to each point in $S$, a positive integer $k$, and a real number $\epsilon > 0$, in this section we describe an algorithm to compute a geodesic $(k, 4+\epsilon)$-vertex fault-tolerant spanner with $O(\frac{kn\lg{n}}{\epsilon^2})$ edges for the set $S$ of weighted points.

We denote the boundary of $\mathcal{T}$ with $\partial \mathcal{T}$.
Also, we denote a geodesic Euclidean shortest path (a path lying on $\cal T$) between any two points $a, b \in \mathcal{T}$ with $\pi(a, b)$.
The distance along $\pi(a, b)$ is denoted by $d_\pi(a, b)$. 
For any two points $x, y \in \partial \mathcal{T}$, we denote the closed region lying to the right (resp. left) of $\pi(x,y)$ when going from $x$ to $y$, including (resp. excluding) the points lying on the shortest path $\pi(x,y)$ with $\pi^{+}(x,y)$ (resp. $\pi^{-}(x,y)$).
The {\it projection} $p_\pi$ of a point $p \in {\cal T}$ on a shortest path $\pi$ between two points lying on $\mathcal{T}$ is defined as a point on $\pi$ that is at the minimum geodesic distance from $p$ among all the points of $\pi$.
For three points $u,v,w \in \mathcal{T}$, the closed region bounded by shortest paths $\pi(u,v)$, $\pi(v,w)$, and $\pi(w,u)$ is called an {\it sp-triangle}, denoted by $\Delta(u, v, w)$.
If points $u, v, w \in \mathcal{T}$ are clear from the context, we denote the sp-triangle with $\Delta$.
In the following, we restate a theorem from \cite{conf/soda/AbamBS17}, which is useful for our analysis.

\begin{theorem}
\label{thm7}
For any set $P$ of $n$ points on a polyhedral terrain $\mathcal{T}$, there exists a balanced sp-separator: a shortest path $\pi(u,v)$ connecting two points $u,v \in \partial \mathcal{T}$ such that $\frac{2n}{9} \leq |\pi^{+}(u,v) \cap P| \leq \frac{2n}{3}$, or a sp-triangle $\Delta$ such that $\frac{2n}{9} \leq |\Delta \cap P| \leq \frac{2n}{3}$.
\end{theorem}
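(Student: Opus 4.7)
The plan is to prove Theorem~\ref{thm7} by a constructive argument that either directly exhibits a balanced shortest-path separator, or iteratively shrinks the ``heavy'' side of an unbalanced one until a balanced sp-triangle emerges. The intuition is that shortest paths between boundary points of $\mathcal{T}$ behave like $1$-dimensional separators for the 2-manifold $\mathcal{T}$, so a Lipton--Tarjan style approach should carry over, with sp-triangles playing the role of the small separating cycle.

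First I would pick any two points $u_0, v_0 \in \partial \mathcal{T}$ and consider the shortest path $\pi(u_0,v_0)$. If either $|\pi^{+}(u_0,v_0) \cap P|$ or $|\pi^{-}(u_0,v_0) \cap P|$ falls in the interval $[\tfrac{2n}{9}, \tfrac{2n}{3}]$, we are done: the path itself is a balanced sp-separator. Otherwise, one side is ``heavy'', containing more than $\tfrac{2n}{3}$ points; without loss of generality say this is $\pi^{+}(u_0,v_0)$, and the other side is ``light'', containing fewer than $\tfrac{2n}{9}$ points.

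Next, to produce an sp-triangle separator inside the heavy side, I would choose a witness point $w \in \pi^{+}(u_0,v_0) \cap P$ (or a vertex of $\mathcal{T}$ lying in that region) and form the sp-triangle $\Delta = \Delta(u_0,v_0,w)$ together with the two shortest paths $\pi(u_0,w)$ and $\pi(w,v_0)$. These three shortest paths decompose $\pi^{+}(u_0,v_0)$ into the interior of $\Delta$ and two ``ear'' regions bounded respectively by $\pi(u_0,v_0)\cup \pi(u_0,w)$ and by $\pi(u_0,v_0)\cup \pi(w,v_0)$. If $|\Delta \cap P| \in [\tfrac{2n}{9},\tfrac{2n}{3}]$ then $\Delta$ is the desired sp-separator. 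Otherwise either $|\Delta \cap P| > \tfrac{2n}{3}$, in which case I recurse inside $\Delta$ with a strictly smaller heavy region, or $|\Delta \cap P| < \tfrac{2n}{9}$, in which case a pigeonhole argument on the two ears forces one of them to contain more than $\tfrac{2n}{9}$ points while still being bounded by $\tfrac{2n}{3}$ (because removing $\Delta$'s points and the light side from $P$ leaves more than $\tfrac{4n}{9}$ points split between two ears). I would then argue that the corresponding path $\pi(u_0,w)$ or $\pi(w,v_0)$ is a balanced sp-separator by the same inequality.

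Two things deserve care. The first, and main obstacle, is choosing the witness $w$ so that the recursion on the heavy case (when $|\Delta\cap P| > \tfrac{2n}{3}$) strictly decreases a measure that guarantees termination; a natural choice is to take $w$ as the point realizing the median position of $\pi^{+}(u_0,v_0) \cap P$ along some canonical ordering on the heavy region (for instance, along a BFS-like layering from $\pi(u_0,v_0)$), which should ensure that each recursive step sheds a constant fraction of the heavy side. The second is the geometric fact that the three sp-paths $\pi(u_0,v_0)$, $\pi(u_0,w)$, $\pi(w,v_0)$ genuinely bound a single topological disk on $\mathcal{T}$; I would argue this by appealing to the simply-connectedness of $\mathcal{T}$ (its domain $D$ is a convex polygon) so that any closed curve formed by concatenating shortest paths bounds a well-defined interior, and apply the above counting argument cleanly to those three regions. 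Once termination and the interval $[\tfrac{2n}{9},\tfrac{2n}{3}]$ are guaranteed by the pigeonhole/recursion analysis, the theorem follows.
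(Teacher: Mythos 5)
The paper does not actually prove this statement: Theorem~\ref{thm7} is quoted verbatim from Abam, de~Berg and Seraji \cite{conf/soda/AbamBS17} and used as a black box, so your argument has to stand entirely on its own. As written, it does not, and the problems are concentrated exactly where the two output types of the theorem (a path with \emph{both endpoints on} $\partial\mathcal{T}$, or an sp-triangle) have to be produced.

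The most serious gap is the case $|\Delta\cap P|<\tfrac{2n}{9}$. Your pigeonhole gives some ``ear'' more than $\tfrac{5n}{18}>\tfrac{2n}{9}$ points, but nothing bounds it from above by $\tfrac{2n}{3}$: the light side, $\Delta$, and the other ear could all be nearly empty, leaving one ear with essentially all $n$ points, so the claimed balance fails. Worse, even a balanced ear certifies nothing of the required form: the witness $w$ is an interior point of $\mathcal{T}$, so $\pi(u_0,w)$ is not a shortest path between two points of $\partial\mathcal{T}$ (it does not even separate the disk), and the ear itself is bounded by two shortest paths plus a boundary arc, so it is neither a region $\pi^{+}(u,v)$ nor an sp-triangle. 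A second gap is termination in the case $|\Delta\cap P|>\tfrac{2n}{3}$: ``the median position along a BFS-like layering'' is not a defined potential, and recursing inside $\Delta$ reproduces the same difficulty, since the subregions you would want to output as path separators again have corners off $\partial\mathcal{T}$. Finally, the claim that the three shortest paths bound a single topological disk needs the non-crossing structure of geodesics (paths from a common source do not cross, distinct geodesics cross at most once), not just simple-connectedness of $\mathcal{T}$; you flag this but do not resolve it. The proof in \cite{conf/soda/AbamBS17} supplies precisely the mechanism your sketch is missing: it fixes one boundary endpoint and sweeps the other along $\partial\mathcal{T}$, using the non-crossing property to make the regions $\pi^{+}(u,v)$ grow monotonically, so the count can only overshoot the window $[\tfrac{2n}{9},\tfrac{2n}{3}]$ at a discrete jump, and only that jump region is then subdivided by an sp-triangle. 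Without an analogue of that monotone sweep, your construction neither guarantees the upper bound $\tfrac{2n}{3}$ nor keeps the path endpoints on the boundary, so the argument does not go through.
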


Thus, an sp-separator is either bounded by a shortest path (in the former case), or by three shortest paths (in the latter case).

First, a balanced sp-separator as given in Theorem~\ref{thm7} is computed. 
Let $\gamma$ be a shortest path that belongs to an sp-separator.
The sets $S_{in}$ and $S_{out}$ comprising of points are defined as follows: if the sp-separator is a shortest path, then define $S_{in}$ to be $\gamma^{+}(u,v) \cap S$; otherwise, $S_{in}$ is $\Delta \cap S$; and, the set $S_{out} = S - S_{in}$.
For each point $p \in S$, we compute a projection $p_{\gamma}$ of $p$ on every shortest path $\gamma$ of sp-separator, and associate a weight $w(p) + d_{\mathcal{T}}(p, p_{\gamma})$ with $p_{\gamma}$.
Let $S_{\gamma}$ be the set $\cup_{p \in S}\hspace{0.02in} p_{\gamma}$.
Using Algorithm~\ref{alg:addfts} from Section~\ref{sect:rd}, we compute a $(4+\epsilon)$-spanner $G_{\gamma}$ for the set $S_{\gamma}$ of points located on $\gamma$.
Further, for each edge $(p_{\gamma}, q_{\gamma})$ in $G_{\gamma}$, an edge $(p,q)$ is included in $G$, where $p_{\gamma}$ (resp. $q_{\gamma}$) is the projection of $p$ (resp. $q$) on $\gamma$. 
More edges are added to $G$ while recursively processing points in sets $S_{in}$ and $S_{out}$.
Refer to Algorithm~\ref{alg:terr}.

\begin{algorithm}[ht]
    \caption{VFTSWPTerrain(${\cal T}, S, k, \epsilon$)}
    \label{alg:terr}

    \SetKwInOut{Input}{Input}
    \SetKwInOut{Output}{Output}
    
    \Input{A polyhedral terrain $\cal T$, a set $S$ on $n$ points located in $\cal T$, a weight function $w$ that associates a non-negative weight to each point in $S$, an integer $k \geq 1$, and a real number $0 < \epsilon \le 1$.}
    \Output{A $(k,(12+\epsilon))$-VFTS $G$.}
    
    \While{$|\mathcal{T} \cap S| \geq 1$}{
        
        Compute a balanced sp-separator $\Gamma$ for $\mathcal{T}$ using the algorithm given in \cite{conf/soda/AbamBS17}. \\
        
        \ForEach{bounding shortest path $\gamma$ of $\Gamma$}{
            EdgesFromVFTSWPGeodProj($S, \gamma, G$). \\
        }

        \scriptsize{}
        Let ${\cal T}'$ be $\pi^{+}(u,v)$ if the balanced sp-separator is a shortest path $\pi(u,v)$; otherwise, let ${\cal T}'$ be $\Delta$.
        Also, let $S_{in}$ be the set of points located on ${\cal T}'$. \\
        \normalsize{}
        VFTSWPTerrain(${\cal T}', S_{in}, k, \epsilon$). \\
    
        \scriptsize{}
        Let ${\cal T}''$ be $\pi^{-}(u,v)$ if the balanced sp-separator is a shortest path $\pi(u,v)$; otherwise, let ${\cal T}''$ be ${\cal T} \setminus \Delta$.
        Also, let $S_{out}$ be the set of points located on ${\cal T}''$. \\
        \normalsize{}
        VFTSWPTerrain(${\cal T}'', S_{out}, k, \epsilon$). 
    }
        
\end{algorithm}

The rest of the algorithm in constructing $G$ is same as the algorithm given in Section~\ref{sect:polydom}.
To prove graph $G$ is a geodesic $(k,(12+15\epsilon))$-VFTSWP for the weighted points in $S$, we induct on the number of points. 
Consider any set $S' \subset S$ such that $|S'| \leq k$, and any two arbitrary points $p$ and $q$ from the set $S \setminus S'$.
The induction hypothesis assumes when the number of points is less than $n$, the graph $G$ is a $(k, 12+15\epsilon)$-VFTSWP.
As part of the inductive step, we consider the set $S$, having $n$ points.
For the case of both $p, q \in S_{in}$ or both of them belong to $S_{out}$, by induction hypothesis (as the number of points in $S_{in}$ or $S_{out}$ is less than $n$), there exists a $(12+15\epsilon)$-spanner path between $p$ and $q$ in $G \setminus S'$. 
Otherwise, if $p \in S_{in}$ and $q \in S_{out}$ or, $q \in S_{in}$ and $p \in S_{out}$.
Without loss of generality, assume the former holds.
Let $r$ be a point on $\gamma$ at which the geodesic shortest path $\pi(p,q)$ intersects $\gamma$.
Also, let $p_\gamma, q_\gamma$ respectively be the geodesic projections of $p$ and $q$ on $\gamma$.  
Since $G_\gamma$ is a $(k, 4+5\epsilon)$-VFTS, there exists a path $\tau$ between $p_\gamma$ and $q_\gamma$ in $G_\gamma$ with length at most $(4+5\epsilon).d_w(p_\gamma, q_\gamma)$. 
Let $\tau'$ be the path obtained by replacing each vertex $x_\gamma$ of $\tau$ by $x \in S$, for $x_\gamma$ being the geodesic projection of $x$ on $\gamma$. 
Note that $\tau'$ is a path between $p$ and $q$ in $G \setminus S'$.
The rest of the proof in showing that there exists a path between $p$ and $q$ in $G \setminus S'$ such that $d_G(p, q)$ is at most $(12+15\epsilon)d_w(p,q)$ is same as in the proof of Lemma~\ref{lem20}.

To obtain a $(4+\epsilon)$-VFTSWP, same as in Section~\ref{sect:polydom}, we apply the refinements to each shortest path in all the sp-separators.
The following theorem is an immediate consequence of these refinements. 
The proof of this theorem is the same as the proof of Theorem~\ref{thm:polydom} for points in a polygonal domain.

\begin{theorem}
\label{lem26}
Given a set $S$ of $n$ points located on a polyhedral terrain $\cal T$, a weight function $w$ to associate a non-negative weight to each point in $S$, a positive integer $k$, and a real number $\epsilon > 0$, there exists a $(k, 4+\epsilon)$-vertex fault tolerant geodesic spanner with $O(\frac{k n}{\epsilon^{2}}\lg{n})$ edges for the weighted points in $S$.
\end{theorem}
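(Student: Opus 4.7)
The plan is to apply to Algorithm~\ref{alg:terr} the same refinement that was used in Section~\ref{sect:polydom} to upgrade the stretch from $12+\epsilon$ to $4+\epsilon$, and then to mirror the two-part analysis (edge count and stretch) of Theorem~\ref{thm:polydom}. Concretely, for each bounding shortest path $\gamma$ of a balanced sp-separator and for each point $p \in S$, I would replace the single projection $p_\gamma$ by a set $S(p,\gamma)$ of $O(1/\epsilon^{2})$ candidate projections: take $\gamma(p) \subseteq \gamma$ consisting of the points of $\gamma$ within weighted distance $(1+2\epsilon)\,d_\pi(p,p_\gamma)$ of $p_\gamma$, partition $\gamma(p)$ into $O(1/\epsilon^{2})$ pieces, and pick in each piece a nearest point to $p$. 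Each such projection $r$ receives weight $w(p) + d_\pi(p,r)$. Algorithm~\ref{alg:addfts} is then run on $S_\gamma = \bigcup_{p} S(p,\gamma)$ to produce a $(k, 4+5\epsilon)$-VFTSWP $G_\gamma$; every edge of $G_\gamma$ is lifted back to $G$ as a geodesic edge between the two underlying points of $S$.

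For the edge count, Theorem~\ref{thm7} provides a balanced sp-separator that drops the point count by a constant factor, so the recursion tree has depth $O(\lg n)$. At a recursive call on a subset of $m$ points, the sp-separator consists of at most three shortest paths; on each such $\gamma$ we create $O(m/\epsilon^{2})$ projected points and then invoke Algorithm~\ref{alg:addfts}, which by Theorem~\ref{thm:rd} contributes $O(km/\epsilon^{2})$ edges. Summing the per-level contribution $O(kn/\epsilon^{2})$ over the $O(\lg n)$ levels yields the claimed bound of $O(\tfrac{kn}{\epsilon^{2}}\lg n)$. Unlike Theorem~\ref{thm:polydom}, no $\sqrt{h+1}$ factor appears: the terrain setting has no analog of the initial $O(\sqrt{h})$-separator phase used to handle holes in a polygonal domain.

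For the stretch, I would induct on $|S|$ and follow the template of the proof of Theorem~\ref{thm:polydom}. Fix $S' \subset S$ with $|S'| \le k$ and take $p,q \in S \setminus S'$. If $p$ and $q$ lie on the same side of the current sp-separator, the induction hypothesis applies directly. Otherwise, some bounding shortest path $\gamma$ is crossed by $\pi(p,q)$ at a point $r$; picking $p_\gamma' \in S(p,\gamma)$ nearest to $p$ and $q_\gamma' \in S(q,\gamma)$ nearest to $q$, the VFTS property of $G_\gamma$ yields a path in $G_\gamma \setminus S_\gamma'$ of length at most $(4+5\epsilon)\,d_w(p_\gamma',q_\gamma')$. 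Lifting this path back to $G \setminus S'$ and applying the triangle inequality exactly as in inequality~(\ref{simppoly:lweq}), combined with the terrain analog of Lemma~\ref{lemfromabam}, gives $d_w(p_\gamma',q_\gamma') \le (1+\epsilon)\,d_w(p,q)$, hence $d_{G \setminus S'}(p,q) \le (4+5\epsilon)(1+\epsilon)\,d_w(p,q) \le (4+14\epsilon)\,d_w(p,q)$, which is $(4+\epsilon)$ after the standard rescaling of $\epsilon$.

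The main obstacle is the terrain analog of Lemma~\ref{lemfromabam}: in the polygonal-domain setting the splitting segments were straight line segments and $d_\gamma$ was simply Euclidean distance along $\gamma$, whereas here $\gamma$ is an intrinsic geodesic on $\cal T$ and $d_\gamma$ is its arc length. One needs $d_\pi(x,x_\gamma) + d_\gamma(x_\gamma,r) \le (1+\epsilon)\,d_\pi(x,r)$ for $x \in \{p,q\}$ to push through the stretch calculation. Granting this (it is the statement used in \cite{conf/soda/AbamBS17} for terrains, via the same piecewise partition of $\gamma(p)$ that defines $S(p,\gamma)$), the remainder of the argument is a routine transcription of the proof of Theorem~\ref{thm:polydom}.
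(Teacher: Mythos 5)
Your proposal is correct and follows essentially the same route as the paper: the paper likewise obtains Theorem~\ref{lem26} by applying the $S(p,\gamma)$ refinement of Section~\ref{sect:polydom} to each bounding shortest path of the balanced sp-separators and then declares the proof identical to that of Theorem~\ref{thm:polydom}, with the edge count following from the $O(\lg n)$ recursion depth and the $O(kn/\epsilon^2)$ per-level contribution. If anything, you are more explicit than the paper in isolating the one non-trivial transcription point, namely the terrain analog of Lemma~\ref{lemfromabam} imported from \cite{conf/soda/AbamBS17}.
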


\section{Conclusions}
\label{sect:conclu}

In this paper, we presented algorithms to achieve $k$ vertex fault-tolerance when the points are associated with non-negative weights.
We devised algorithms to compute a $(k, 4+\epsilon)$-VFTSWP when the input points belong to any of the following: $\mathbb{R}^{d}$, simple polygon, polygonal domain, and on the terrain.
It would be interesting to achieve a better bound on the stretch factor when each point is associated with either a unit or a zero weight.
Apart from efficient computation, finding the worst-case lower bound on the number of edges for the fault-tolerant spanners for weighted points could be interesting. 
Besides, the future work in the context of these kinds of spanners could include finding a relation between vertex fault-tolerance and edge fault -tolerance, and optimizing maximum degree, diameter, and weight.

\subsection*{Acknowledgement}

This research of R. Inkulu is supported in part by NBHM grant 248(17)2014-R\&D-II/1049, and SERB MATRICS grant MTR/2017/000474.

\bibliographystyle{plain}


\begin{thebibliography}{10}

\bibitem{conf/compgeom/AbamAHA15}
M.~A. Abam, M.~Adeli, H.~Homapour, and P.~Z. Asadollahpoor.
\newblock {Geometric spanners for points inside a polygonal domain}.
\newblock In {\em Proceedings of Symposium on Computational Geometry}, pages
  186--197, 2015.

\bibitem{journals/dcg/AbamBerg11}
M.~A. Abam and M.~de~Berg.
\newblock {Kinetic spanners in $\mathbb{R}^d$}.
\newblock {\em Discrete {\&} Computational Geometry}, 45(4):723--736, 2011.

\bibitem{journals/dcg/ABFG09}
M.~A. Abam, M.~de~Berg, M.~Farshi, and J.~Gudmundsson.
\newblock {Region-fault tolerant geometric spanners}.
\newblock {\em Discrete {\&} Computational Geometry}, 41(4):556--582, 2009.

\bibitem{journals/algorithmica/AbamBFGS11}
M.~A. Abam, M.~de~Berg, M.~Farshi, J.~Gudmundsson, and M.~H.~M. Smid.
\newblock {Geometric spanners for weighted point sets}.
\newblock {\em Algorithmica}, 61(1):207--225, 2011.

\bibitem{journals/cgta/ABG10}
M.~A. Abam, M.~de~Berg, and J.~Gudmundsson.
\newblock {A simple and efficient kinetic spanner}.
\newblock {\em Computational Geometry}, 43(3):251--256, 2010.

\bibitem{conf/soda/AbamBS17}
M.~A. Abam, M.~de~Berg, and M.~J.~R. Seraji.
\newblock {Geodesic spanners for points on a polyhedral terrain}.
\newblock In {\em Proceedings of Symposium on Discrete Algorithms}, pages
  2434--2442, 2017.

\bibitem{journals/wireless/Karim11}
A.~K. Abu-Affash, R.~Aschner, P.~Carmi, and M.~J. Katz.
\newblock {{Minimum power energy spanners in wireless ad hoc networks}}.
\newblock {\em Wireless Networks}, 17(5):1251--1258, 2011.

\bibitem{journals/siamdm/AlonST94}
N.~Alon, P.~D. Seymour, and R.~Thomas.
\newblock {Planar separators}.
\newblock {\em SIAM Journal on Discrete Mathematics}, 7(2):184--193, 1994.

\bibitem{journals/dcg/AlthoferDDJS93}
I.~Alth{\"{o}}fer, G.~Das, D.~Dobkins, D.~Joseph, and J.~Soares.
\newblock {On sparse spanners of weighted graphs}.
\newblock {\em Discrete \& Computational Geometry}, 9(1):81--100, 1993.

\bibitem{conf/esa/ArikatiCCDSZ96}
S.~R. Arikati, D.~Z. Chen, L.~P. Chew, G.~Das, M.~H.~M. Smid, and C.~D.
  Zaroliagis.
\newblock {Planar spanners and approximate shortest path queries among
  obstacles in the plane}.
\newblock In {\em Proceedings of European Symposium on Algorithms}, pages
  514--528, 1996.

\bibitem{journals/cgta/ABCGHSV08}
B.~Aronov, M.~de~Berg, O.~Cheong, J.~Gudmundsson, H.~Haverkort, M.~Smid, and
  A.~Vigneron.
\newblock {Sparse geometric graphs with small dilation}.
\newblock {\em Computational Geometry}, 40(3):207--219, 2008.

\bibitem{conf/stoc/AryaDMSS95}
S.~Arya, G.~Das, D.~M. Mount, J.S. Salowe, and M.~H.~M. Smid.
\newblock {Euclidean spanners: short, thin, and lanky}.
\newblock In {\em Proceedings of Annual {ACM} Symposium on Theory of
  Computing}, pages 489--498, 1995.

\bibitem{journals/cgta/AryaMountSmid99}
S.~Arya, D.~M. Mount, and M.~Smid.
\newblock {Dynamic algorithms for geometric spanners of small diameter:
  Randomized solutions}.
\newblock {\em Computational Geometry}, 13(2):91--107, 1999.

\bibitem{conf/focs/AryaMS94}
S.~Arya, D.~M. Mount, and M.~H.~M. Smid.
\newblock {Randomized and deterministic algorithms for geometric spanners of
  small diameter}.
\newblock In {\em Proceedings of Annual Symposium on Foundations of Computer
  Science}, pages 703--712, 1994.

\bibitem{journals/algorithmica/AryaS97}
S.~Arya and M.~H.~M. Smid.
\newblock {Efficient construction of a bounded-degree spanner with low weight}.
\newblock {\em Algorithmica}, 17(1):33--54, 1997.

\bibitem{conf/caldam/Inkulu19b}
S.~Bhattacharjee and R.~Inkulu.
\newblock {Fault-tolerant additive weighted geometric spanners}.
\newblock In {\em Proceedings of Conference on Algorithms and Discrete Applied
  Mathematics}, pages 29--41, 2019.

\bibitem{journals/cgta/BCCMSZ09}
P.~Bose, P.~Carmi, M.~Couture, A.~Maheshwari, M.~Smid, and N.~Zeh.
\newblock {Geometric spanners with small chromatic number}.
\newblock {\em Computational Geometry}, 42(2):134--146, 2009.

\bibitem{journals/algorithmica/BCFMS10}
P.~Bose, P.~Carmi, M.~Farshi, A.~Maheshwari, and M.~Smid.
\newblock {Computing the greedy spanner in near-quadratic time}.
\newblock {\em Algorithmica}, 58(3):711--729, 2010.

\bibitem{journals/cgta/BCCCKL13}
P.~Bose, P.~Carmi, Chaitman L., S.~Collette, M.~J. Katz, and S.~Langerman.
\newblock {Stable roommates spanner}.
\newblock {\em Computational Geometry}, 46(2):120--130, 2013.

\bibitem{conf/swat/BoseCC08}
P.~Bose, P.~Carmi, and Couture M.
\newblock {Spanners of additively weighted point sets}.
\newblock In {\em Proceedings of Scandinavian Workshop on Algorithm Theory},
  pages 367--377, 2008.

\bibitem{conf/jcdcg/BoseCKKM98}
P.~Bose, J.~Czyzowicz, E.~Kranakis, D.~Krizanc, and A.~Maheshwari.
\newblock {Polygon cutting: revisited}.
\newblock In {\em Japanese Conference on Discrete and Computational Geometry
  and Graphs}, pages 81--92, 1998.

\bibitem{journals/algorithmica/BFRV18}
P.~Bose, R.~Fagerberg, A.~van Renssen, and S.~Verdonschot.
\newblock {On plane constrained bounded-degree spanners}.
\newblock {\em Algorithmica}, 2018.

\bibitem{journals/algorithmica/BoseGudSmid05}
P.~Bose, J.~Gudmundsson, and M.~Smid.
\newblock {Constructing plane spanners of bounded degree and low weight}.
\newblock {\em Algorithmica}, 42(3):249--264, 2005.

\bibitem{journals/ijcga/BoseSmidXu09}
P.~Bose, M.~Smid, and D.~Xu.
\newblock {Delaunay and diamond triangulations contain spanners of bounded
  degree}.
\newblock {\em International Journal of Computational Geometry \&
  Applications}, 19(02):119--140, 2009.

\bibitem{journals/corr/CarmiChait10}
P.~Carmi and L.~Chaitman.
\newblock {Bounded degree planar geometric spanners}.
\newblock {\em CoRR}, abs/1003.4963, 2010.

\bibitem{conf/cccg/CarmiChai10}
P.~Carmi and L.~Chaitman.
\newblock {Stable roommates and geometric spanners}.
\newblock In {\em Proceedings of the 22nd Annual Canadian Conference on
  Computational Geometry}, pages 31--34, 10 2010.

\bibitem{journals/jocg/CarmiSmid12}
P.~Carmi and M.~H.~M. Smid.
\newblock {An optimal algorithm for computing angle-constrained spanners}.
\newblock {\em Journal of Computational Geometry}, 3(1):196--221, 2012.

\bibitem{journals/jcss/Chew89}
L.~P. Chew.
\newblock {There are planar graphs almost as good as the complete graph}.
\newblock {\em Journal of Computer and System Sciences}, 39(2):205--219, 1989.

\bibitem{journals/dcg/CzumajZ04}
A.~Czumaj and H.~Zhao.
\newblock {Fault-tolerant geometric spanners}.
\newblock {\em Discrete {\&} Computational Geometry}, 32(2):207--230, 2004.

\bibitem{conf/optalgo/DasJoseph89}
G.~Das and D.~Joseph.
\newblock Which triangulations approximate the complete graph?
\newblock In {\em Optimal Algorithms}, pages 168--192, 1989.

\bibitem{journals/ijcga/DasNara97}
G.~Das and G.~Narasimhan.
\newblock {A fast algorithm for constructing sparse Euclidean spanners}.
\newblock {\em International Journal of Computational Geometry \&
  Applications}, 7(4):297--315, 1997.

\bibitem{hb/cg/Epp99}
D.~Eppstein.
\newblock {Spanning trees and spanners}.
\newblock In J.-R. Sack and J.~Urrutia, editors, {\em {Handbook of
  Computational Geometry}}, pages 425--461. Elsevier, 1999.

\bibitem{hb/apprxheu/GudKnau07}
J.~Gudmundsson and C.~Knauer.
\newblock {Dilation and detour in geometric networks}.
\newblock In T.~Gonzalez, editor, {\em {Handbook of Approximation Algorithms
  and Metaheuristics}}. Chapman \& Hall, 2007.

\bibitem{journals/siamjc/GudmudLevcoNar02}
J.~Gudmundsson, C.~Levcopoulos, and G.~Narasimhan.
\newblock {Fast greedy algorithms for constructing sparse geometric spanners}.
\newblock {\em SIAM Journal on Computing}, 31(5):1479--1500, 2002.

\bibitem{journals/corr/KapLi13}
S.~Kapoor and X.-Y. Li.
\newblock {Efficient construction of spanners in d-dimensions}.
\newblock {\em CoRR}, abs/1303.7217, 2013.

\bibitem{conf/wads/KeilGutwin89}
J.~M. Keil and C.~A. Gutwin.
\newblock {The Delaunay triangulation closely approximates the complete
  Euclidean graph}.
\newblock In F.~Dehne, J.-R. Sack, and N.~Santoro, editors, {\em Algorithms and
  Data Structures}, pages 47--56. Springer, 1989.

\bibitem{journals/algorithmica/LevcopoulosNS02}
C.~Levcopoulos, G.~Narasimhan, and M.~H.~M. Smid.
\newblock {Improved algorithms for constructing fault-tolerant spanners}.
\newblock {\em Algorithmica}, 32(1):144--156, 2002.

\bibitem{conf/wads/Lukovszki99}
T.~Lukovszki.
\newblock {New results of fault tolerant geometric spanners}.
\newblock In {\em Proceedings of Workshop on Algorithms and Data Structures},
  pages 193--204, 1999.

\bibitem{books/compgeom/narsmid2007}
G.~Narasimhan and M.~H.~M. Smid.
\newblock {\em {Geometric spanner networks}}.
\newblock Cambridge University Press, 2007.

\bibitem{journals/jgt/PelegSchaffer89}
D.~Peleg and A.~Schäffer.
\newblock {Graph spanners}.
\newblock {\em Journal of Graph Theory}, 13(1):99--116, 1989.

\bibitem{conf/infocom/SegalShpu10}
M.~Segal and H.~Shpungin.
\newblock {Improved multi-criteria spanners for ad-hoc networks under energy
  and distance metrics}.
\newblock In {\em IEEE INFOCOM}, pages 6--10, 2010.

\bibitem{conf/stoc/Solomon14}
S.~Solomon.
\newblock {From hierarchical partitions to hierarchical covers: optimal
  fault-tolerant spanners for doubling metrics}.
\newblock In {\em Proceedings of Symposium on Theory of Computing}, pages
  363--372, 2014.

\bibitem{conf/stoc/Talwar04}
K.~Talwar.
\newblock {Bypassing the embedding: algorithms for low dimensional metrics}.
\newblock In {\em Proceedings of {ACM} Symposium on Theory of Computing}, pages
  281--290, 2004.

\bibitem{journals/jco/WangLi06}
Y.~Wang and X.-Y. Li.
\newblock {Minimum power assignment in wireless ad hoc networks with spanner
  property}.
\newblock {\em Journal of Combinatorial Optimization}, 11(1):99--112, 2006.

\end{thebibliography}

\end{document}